\newcommand{\var}{\operatorname{Var}}
\newtheorem{theorem}{Theorem}[section]
\newtheorem{proposition}[theorem]{Proposition}
\newtheorem{lemma}[theorem]{Lemma}
\newtheorem{conjecture}[theorem]{Conjecture}
\newtheorem{remark}[theorem]{Remark}
\begin{document}
\title{The defect variance of random spherical harmonics}
\author{Domenico Marinucci and Igor Wigman}
\address{Department of Mathematics, University of Rome Tor Vergata}
\email{marinucc@mat.uniroma2.it}
\address{Department of Mathematics, Cardiff University, Wales, UK}
\email{wigmani@cardiff.ac.uk}
\thanks{IW was supported by the Knut and Alice Wallenberg Foundation, grant
KAW.2005.0098}
\maketitle

\begin{abstract}
The defect of a function $f:M\rightarrow \mathbb{R}$ is defined as the
difference between the measure of the positive and negative regions. In this
paper, we begin the analysis of the distribution of defect of random
Gaussian spherical harmonics. By an easy argument, the defect is non-trivial
only for even degree and the expected value always vanishes. Our principal
result is obtaining the asymptotic shape of the defect variance, in the high
frequency limit. As other geometric functionals of random eigenfunctions,
the defect may be used as a tool to probe the statistical properties of
spherical random fields, a topic of great interest for modern Cosmological
data analysis.
\end{abstract}

\section{Introduction}

In recent years, a lot of interest has been drawn by the analysis of the
geometric features of random eigenfunctions for the spherical Laplacian.
More precisely, let us denote as usual by $\left\{ Y_{lm}(.)\right\}
_{m=-l,...,l},$ for $l=1,2,...$ the set of (real-valued) spherical
harmonics, i.e. the array of real-valued functions satisfying
\begin{equation*}
\Delta _{S^{2}}Y_{lm}=-l(l+1)Y_{lm}\text{,}
\end{equation*}
where $\Delta _{S^{2}}$ denotes the spherical Laplacian (see for instance
\cite{VMK}, \cite{Wig2}). The random model we shall focus on is
\begin{equation}
f_{l}(x)=\frac{1}{\sqrt{2l+1}}\sum_{m=-l}^{l}a_{lm}Y_{lm}(x)
\label{eq:fl def}
\end{equation}%
where the coefficients $\left\{ a_{lm}\right\} $ are independent standard
Gaussian with zero-mean and unit variance.

The random field $f_{l}$ is isotropic, meaning that for any $k\in \mathbb{N}$
and $x_{1},\ldots ,x_{k}\in \mathcal{S}^{2}$, the distribution of the random
vector $\left\{ f_{l}(x_{1}),\ldots ,f_{l}(x_{l})\right\} $ is invariant
under rotations, i.e. equals to the distribution of $\left\{ f_{l}(g\cdot
x_{1}),\ldots ,f_{l}(g\cdot x_{l})\right\} $ for any rotation $g\in SO(3)$
on the sphere. Also, $f_{l}$ is centred Gaussian, with covariance function
\begin{equation}
r_{l}(x,y):=\mathbb{E}[f_{l}(x)\cdot f_{l}(y)]=P_{l}(\cos (d(x,y))),
\label{eq:rl coval def}
\end{equation}
where $P_{l}$ are the usual Legendre polynomials defined by Rodrigues'
formula
\begin{equation*}
P_{l}(t):=\frac{1}{2^{l}l!}\frac{d^{l}}{dt^{l}}(t^{2}-1)^{l}
\end{equation*}
and $d(x,y)$ is the (spherical) geodesic distance between $x$ and $y$. As
well-known, Legendre polynomials are orthogonal w.r.t. the constant weight $
\omega (t)\equiv 1$ on $[-1,1]$, see Section \ref{sec:Legendre moments} for
more discussion and details.

\subsection{Background}

Random eigenfunctions for the spherical Laplacian naturally emerge in a
number of different physical contexts. A particularly active area is related
to the analysis of isotropic spherical random fields on the sphere, as
motivated for instance by the analysis of Cosmic Microwave Background
radiation (CMB), see for instance \cite{dodelson,durrer} and the references
therein. Under these circumstances, eigenfunctions like $f_{l}$ represent
the (normalized) Fourier components of the field, i.e. the following
orthogonal expansion holds, in the $L^2$ sense:
\begin{equation*}
f(x)=\sum_{l=1}^{\infty }c_{l}f_{l}(x)\text{ ,}
\end{equation*}
where $c_{l}\geq 0$ is a deterministic sequence (depending on the angular
power spectrum) which encodes the full correlation structure of $f.$

In the CMB literature, $f(x)$, and hence the components $f_{l}$, are
actually observed by highly sophisticated satellite experiments such as
\emph{WMAP} by NASA and \emph{Planck} by ESA. It is then common practice to
analyze geometric functionals of the observed CMB radiation to constraint
the statistical properties of the underlying fields, e.g. to test for
isotropy and/or Gaussianity. For instance, the three Minkowski functionals,
providing the area, the boundary length and the topological genus of
excursion sets over a given level $\left( A_{z}:=\left\{x\in \mathcal{S}%
^{2}:f(x)\geq z\right\} \right)$ have been applied on CMB data by a huge
number of authors, including \cite{hikage1,hikage2,matsubara}. Many efforts
have been spent to analyze unexpected features; for instance, the so-called
Cold Spot (see \cite{Cruz1,Cruz2} and the references therein), whose
statistical significance can also be evaluated by means of local curvature
properties of isotropic Gaussian fields (\cite{hansen}).

In this work, we shall focus on one of the most important geometric
functionals, namely the \emph{defect}. The defect (or ``signed area'', see 
\cite{BGS}) of a function $\psi :\mathbb{S}^{2}\rightarrow \mathbb{R}$ is
defined as
\begin{equation}
\mathcal{D}(\psi ):=\mathrm{meas}\left( \psi ^{-1}(0,\infty )\right) -
\mathrm{\ meas}\left( \psi ^{-1}(-\infty ,0)\right) =\int_{\mathcal{S}^{2}}
\mathcal{H}(\psi (x))dx.  \label{eq:defect def}
\end{equation}
Here $\mathcal{H}(t)$ is the Heaviside function
\begin{equation}
\mathcal{H}(t)=\mathds{1}_{[0,\infty) }(t)-\mathds{1}_{(-\infty ,0]}(t)=
\begin{cases}
1 & t>0 \\
-1 & t<0 \\
0 & t=0
\end{cases}
,  \label{eq:heaviside def}
\end{equation}
where $\mathds{1}_{A}(t)$ is the usual indicator function of the set $A$,
and $dx$ is the Lebesgue measure. The defect is hence the difference between the
areas of positive and negative inverse image of $\psi $, respectively.

Of course, the defect $\mathcal{D}_{l}=\mathcal{D}(f_{l})$ of $f_{l}$ is a
random variable; its distribution is the primary focus of the present paper.
Note that for odd $l$, $f_{l}$ is always odd, so that in this case the
defect vanishes identically ($\mathcal{D}_{l}\equiv 0$), and therefore
$\mathcal{D}_{l}$ has nontrivial distribution for even $l$ only. A possible
alternative to avoid trivialities is to restrict ourselves to a subset
of the sphere, the most natural choice being a hemisphere, i.e. we may
choose any hemisphere $\mathcal{E}\subseteq \mathcal{S}^{2}$ and define
\begin{equation*}
\mathcal{D}_{l}^{\mathcal{E}}:=\int_{\mathcal{E}}\mathcal{H}(f_{l}(x))dx;
\end{equation*}
in this paper we study $\mathcal{D}_{l}$ for even $l$ only.

\subsection{Statement of the main result}

\label{sec:main res}

Evaluating the expectation of $\mathcal{D}_{l}$ is trivial. Indeed, notice
that integration over $\mathcal{S}^{2}$ is exchangeable with expectation, so
that
\begin{equation*}
\mathbb{E}[\mathcal{D}_{l}]=\int_{\mathcal{S}^{2}}\mathbb{E}\left[ \mathcal{H%
}(f_{l}(x))\right] dx,
\end{equation*}%
and $\mathbb{E}\left[ \mathcal{H}(f_{l}(x))\right] =0$ vanishes for every $%
x\in \mathcal{S}^{2}$, by the symmetry of the Gaussian distribution. We just
established the following lemma:

\begin{lemma}
For every $l=1,2,...,$ we have
\begin{equation*}
\mathbb{E}[\mathcal{D}_{l}]=0.
\end{equation*}
\end{lemma}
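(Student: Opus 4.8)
The plan is to reduce the computation of $\mathbb{E}[\mathcal{D}_l]$ to a pointwise symmetry statement about the one-dimensional marginals of the field. First I would use the integral representation \eqref{eq:defect def}, writing $\mathcal{D}_l = \int_{\sphere} \mathcal{H}(f_l(x))\, dx$, and observe that the integrand is uniformly bounded: since $|\mathcal{H}(t)| \le 1$ for all $t$, we have
\begin{equation*}
\int_{\sphere} \E\left| \mathcal{H}(f_l(x)) \right|\, dx \le \meas(\sphere) = 4\pi < \infty.
\end{equation*}
This trivial integrability bound is exactly what licenses an application of the Fubini--Tonelli theorem, allowing me to interchange expectation and integration and obtain $\E[\mathcal{D}_l] = \int_{\sphere} \E[\mathcal{H}(f_l(x))]\, dx$.

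The heart of the argument is then the pointwise evaluation of the inner expectation. For each fixed $x \in \sphere$, the random variable $f_l(x)$ is a finite linear combination of the i.i.d.\ standard Gaussians $\{a_{lm}\}$, hence is itself centred Gaussian; by \eqref{eq:rl coval def} its variance equals $r_l(x,x) = P_l(1) = 1$, so it is nondegenerate and in particular $\mathbb{P}(f_l(x) = 0) = 0$. Since a centred Gaussian law is symmetric, $f_l(x)$ and $-f_l(x)$ are equal in distribution, and because the Heaviside function \eqref{eq:heaviside def} is odd, i.e.\ $\mathcal{H}(-t) = -\mathcal{H}(t)$, I would conclude
\begin{equation*}
\E[\mathcal{H}(f_l(x))] = \E[\mathcal{H}(-f_l(x))] = -\E[\mathcal{H}(f_l(x))],
\end{equation*}
which forces $\E[\mathcal{H}(f_l(x))] = 0$ for every $x \in \sphere$.

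Substituting this back gives $\E[\mathcal{D}_l] = \int_{\sphere} 0 \, dx = 0$, completing the proof. The only genuinely technical point is the interchange of expectation and integration, and even that is routine thanks to the uniform bound on $\mathcal{H}$; the remainder rests solely on the symmetry of the Gaussian distribution together with the oddness of $\mathcal{H}$, so I do not anticipate any real obstacle. I would also note that the argument uses nothing specific about the parity of $l$, so the identity holds for every degree --- consistent with the fact that for odd $l$ one has $\mathcal{D}_l \equiv 0$ and hence zero mean automatically.
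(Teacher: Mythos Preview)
Your proof is correct and follows exactly the same approach as the paper: exchange expectation with the spherical integral via Fubini, then use the symmetry of the centred Gaussian marginal together with the oddness of $\mathcal{H}$ to conclude that the pointwise expectation vanishes. Your write-up is simply more detailed than the paper's one-line justification, but the underlying argument is identical.
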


The main result of the present paper concerns the asymptotic behaviour of
the defect variance:

\begin{theorem}
\label{thm:var=1/l^2} As $l\rightarrow \infty $ along even integers, the
defect variance is asymptotic to
\begin{equation}
\var(\mathcal{D}_{l})=\frac{C}{l^{2}}(1+o(1)),  \label{eq:var=1/l^2}
\end{equation}
where $C>0$ is a positive constant.
\end{theorem}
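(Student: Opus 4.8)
The plan is to convert the variance into a one–dimensional integral of a nonlinear transform of the covariance, and then extract the leading order from the behaviour of $P_l$ near the two poles $t=\pm1$. Since $\E[\mathcal{D}_l]=0$, I would first write
\[
\var(\mathcal{D}_l)=\int_{\sphere}\int_{\sphere}\E\big[\mathcal{H}(f_l(x))\,\mathcal{H}(f_l(y))\big]\,dx\,dy .
\]
For each fixed pair the vector $(f_l(x),f_l(y))$ is centred bivariate Gaussian with unit variances and correlation $r_l(x,y)=P_l(\cos d(x,y))$, so the classical arcsine (Grothendieck) identity gives $\E[\mathcal{H}(f_l(x))\mathcal{H}(f_l(y))]=\frac{2}{\pi}\arcsin r_l(x,y)$. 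Invoking isotropy to reduce the double integral to a single one (fix $x$, integrate $y$ in geodesic polar coordinates, substitute $t=\cos\theta$) yields the exact formula $\var(\mathcal{D}_l)=16\pi\int_{-1}^{1}\arcsin\big(P_l(t)\big)\,dt$.

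Because $\int_{-1}^1 P_l(t)\,dt=0$ for $l\ge1$ (orthogonality to $P_0\equiv1$), the linear part of $\arcsin$ drops out and only the remainder $R(u):=\arcsin u-u=\tfrac{u^3}{6}+\tfrac{3u^5}{40}+\cdots$, which is bounded and odd on $[-1,1]$, survives: $\var(\mathcal{D}_l)=16\pi\int_{-1}^1 R(P_l(t))\,dt$. I would then split $[-1,1]$ into a \emph{bulk} region, where $|P_l|=O(l^{-1/2})$ and the Hilb/Darboux asymptotics give $P_l(\cos\theta)\sim\sqrt{2/(\pi l\sin\theta)}\cos\big((l+\tfrac12)\theta-\tfrac\pi4\big)$, and two \emph{endpoint} regions near $t=\pm1$. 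In the bulk $R(P_l)\approx P_l^3/6$ has amplitude $\sim l^{-3/2}$ while its oscillatory factor $\cos^3$ has zero mean, so integration by parts yields a bulk contribution of size $o(l^{-2})$. The leading term comes from the poles, where the Mehler--Heine approximation $P_l(\cos\theta)\approx J_0\big((l+\tfrac12)\theta\big)$ holds; rescaling $u=(l+\tfrac12)\theta$ and using $P_l(-t)=P_l(t)$ for even $l$ to double the two poles gives $\int_{-1}^1 R(P_l)\,dt\sim \frac{2}{l^2}\int_0^{\infty}R(J_0(u))\,u\,du$, whence $\var(\mathcal{D}_l)=\frac{C}{l^2}(1+o(1))$ with $C=32\pi\int_0^\infty\big(\arcsin J_0(u)-J_0(u)\big)u\,du$.

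For positivity, expand $R$ term by term: $\int_0^\infty R(J_0(u))u\,du=\sum_{k\ge1}c_{2k+1}\int_0^\infty J_0(u)^{2k+1}u\,du$ with all $c_{2k+1}>0$. Each moment $\int_0^\infty J_0(u)^{2k+1}u\,du$ is, up to a positive factor, the value at the origin of the density of a planar walk with $2k+1$ unit steps in uniformly random directions, hence nonnegative; and for $k=1$ the Weber--Schafheitlin integral evaluates to $\int_0^\infty J_0(u)^3 u\,du=2/(\pi\sqrt3)>0$ (the reciprocal of $2\pi$ times the area of the unit equilateral triangle). Thus $C>0$.

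The main obstacle is making the endpoint analysis rigorous with integrable error terms. I must control $\arcsin(P_l)$ in its full nonlinearity, not merely through the cubic term, on the pole region where $P_l$ is of order $1$; this requires a uniform version of the Mehler--Heine/Hilb asymptotics valid across the transition between the $J_0$ regime ($\theta\lesssim 1/l$) and the oscillatory bulk ($\theta\gg 1/l$), with remainders that are $o(l^{-2})$ after integration against $\sin\theta\,d\theta$. A second delicate point is the merely conditional convergence of $\int_0^\infty R(J_0(u))\,u\,du$ at $u=\infty$, where $R(J_0(u))\,u$ decays only like $u^{-1/2}$ times a mean-zero oscillation, so that its evaluation and the interchange with the limit $l\to\infty$ must be justified by oscillatory–integral (van der Corput) estimates rather than absolute convergence.
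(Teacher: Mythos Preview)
Your proposal is correct and follows the same overall strategy as the paper: reduce the variance to $16\pi\int_{-1}^{1}\arcsin(P_l(t))\,dt$ via the arcsine law, subtract the linear part using $\int_{-1}^{1}P_l=0$, and pass to the Bessel scaling limit to obtain the constant $C=32\pi\int_0^\infty(\arcsin J_0-J_0)\,\psi\,d\psi$. The difference is in how the asymptotic is extracted. The paper does not split into a bulk and two pole regions; instead it Taylor-expands $R(t)=\arcsin t-t=\sum_{k\ge1}a_k t^{2k+1}$ and proves, moment by moment, that $\int_0^{\pi/2}P_l(\cos\theta)^{2k+1}\sin\theta\,d\theta\sim c_{2k+1}/l^2$ with $c_j=\int_0^\infty J_0(\psi)^j\psi\,d\psi$, using Hilb's asymptotics uniformly on $[0,\pi/2]$. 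This has the advantage that for $j\ge5$ the Bessel integrals are absolutely convergent, so the only place conditional convergence must be faced is the single case $j=3$; the tail of the Taylor series is then controlled uniformly by a bound on the fifth absolute moment $\int|P_l|^5\sin\theta\,d\theta=O(l^{-2})$. Your direct bulk/pole decomposition with the full nonlinear $R$ is viable and arguably more geometric, but it forces you to confront the conditional convergence of $\int_0^\infty R(J_0(\psi))\psi\,d\psi$ all at once and to match the bulk oscillatory estimate to the pole contribution at a moving cutoff; the moment-by-moment route sidesteps both issues.

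For positivity, the paper shows $c_{2k+1}\ge0$ by expressing $\int_0^1 P_l^{\,j}$ as a sum of squares of Clebsch--Gordan coefficients, while you invoke the random-walk density interpretation of $\int_0^\infty J_0^{\,2k+1}\psi\,d\psi$; both arguments are valid and both conclude with the explicit value $c_3=2/(\pi\sqrt{3})$.
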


The constant $C$ in \eqref{eq:var=1/l^2} may be expressed in terms of the
infinite (conditionally convergent) integral
\begin{equation}
C=32\pi \int\limits_{0}^{\infty }\psi \left( \arcsin (J_{0}(\psi
))-J_{0}(\psi )\right) d\psi .  \label{eq:C=*int}
\end{equation}%
See Section \ref{sec:on the proof} for some details on the integral on the
RHS of \eqref{eq:C=*int}. We do not know whether one can evaluate $C$
explicitly; however, we shall be able to show that
\begin{equation*}
C>\frac{32}{\sqrt{27}}\text{ ,}
\end{equation*}
(see Lemma \ref{lem:C>2/pi^2 /sqrt(27)}).

One should compare the statement of Theorem \ref{thm:var=1/l^2} to
empirical results of the study conducted by Blum, Gnutzmann and Smilansky ~\cite{BGS}.
The authors of that work studied the defect (or, as they refer to, the ``signed area")
of random monochromatic waves on various planar domains. For the particular
case of unit circle, they found that the order of magnitude of defect
variance is consistent to \eqref{eq:var=1/l^2} per unit area, with leading constant evaluated numerically
as $\approx 0.0386$. The corresponding defect variance per unit area in our situation is
\begin{equation*}
\var\left(\frac{\mathcal{D}_{l}}{4\pi } \right)=\frac{\widetilde{C}}{l^{2}}
(1+o(1))
\end{equation*}
with leading constant $$\widetilde{C}>\frac{2}{\pi ^{2}\times \sqrt{27}} =
0.0389...$$

\subsection{Previous work}

\label{sec:prev work}

To put our results in a proper perspective and explain the technical
difficulties to be handled, we need to briefly recall some results from ~\cite{MaWi}.
In that work we studied the asymptotic behaviour of excursion
sets $$A_{z} = f_{l}^{-1}\left([z,\infty)\right)$$ of $f_{l}$, as $l\rightarrow\infty$. As we will recall below, a
rather peculiar phenomenon can be shown to hold for $z\neq 0; $ namely, the
asymptotic distribution of the area of these excursion sets is fully
degenerate over $z$, i.e. it corresponds to a Gaussian random variable times
a deterministic function of $z$. This result is intuitively due to an
asymptotic degeneracy in the excursion set functionals, which turns out to
be dominated by a single polynomial (quadratic) term.

This degeneracy, however, does not hold for the special case $z=0.$ More
precisely, in \cite{MaWi} we focused on the asymptotic behaviour of the
empirical measure of random spherical harmonics, defined as
\begin{equation*}
\Phi _{l}(z):=\int_{S^{2}}\mathds{1}_{[z,\infty )}(f_{l}(x))dx=\mathrm{meas}
\left\{ x:f_{l}(x)\leq z\right\},
\end{equation*}
$z\in \mathbb{R}$.
A key step in that paper is the asymptotic expansion
\begin{equation}
\Phi _{l}(z)=4\pi \times \Phi (z)+\sum_{q=1}^{\infty }\frac{J_{q}(z)}{q!}%
h_{l;q},\text{ }h_{l;q}:=\int_{\mathbb{S}^{2}}H_{q}(f_{l}(x))dx\text{,}
\label{asexp2}
\end{equation}%
where $H_{q}(.)$ are standard Hermite polynomials, $\Phi (z)=\Pr \left\{
Z\leq z\right\} $ is the cumulative distribution function of a standard
Gaussian variable, and the deterministic functions $J_{q}(z)$ can be
explicitly provided in terms of higher order derivatives of $\Phi (z),$ $%
J_{q}(z)=(-1)^{q}\Phi ^{(q)}(z)$ (see \cite{MaWi} for more discussion and
details).

It turns out that, as $l\rightarrow \infty $,
\begin{equation*}
\frac{\sqrt{l}}{4\pi}h_{l;2}\rightarrow _{d}N(0,1)\text{ , }\sqrt{l}h_{l;q}=o_{p}(1)\text{
, for }q\geq 3\text{;}
\end{equation*}%
moreover $J_{2}(z)=-z\phi (z)$ clearly does not vanish for all $z\neq
0,$ whence the asymptotic behaviour of $\sqrt{l}\left\{ \Phi _{l}(z)-4\pi
\times \Phi (z)\right\} $ is easily seen to be Gaussian, uniformly over $z$.
Furthermore, the limiting process is completely degenerate with respect to $z,$
a feature to which we shall come back later.

For $z\neq 0,$ the asymptotic behaviour of the area functional for the
excursion sets is hence fully understood. The previous argument, however,
fails for $z=0,$ as in this case the leading term is null and each summand
in the asymptotic expansion (\ref{asexp2}) becomes relevant. Up to a linear
transformation, this is clearly equivalent to the defect functional, indeed%
\begin{equation*}
\mathcal{D}_{l}=4\pi -2\Phi _{l}(0)\text{ .}
\end{equation*}%
Thus, the case $z=0$ is the most challenging from the mathematical point of
view, and, at the same time, the most interesting from the point of view of
geometric interpretation.

\subsection{Overview of the paper}

The plan of the paper is as follows. In Section \ref{sec:on the proof}, we
provide the main ideas behind our principal arguments, to help the reader
understand the material to follow; in Section \ref{sec:discussion}, we
discuss the relation of our results to recent works on the
distribution of nodal lengths and level curves for random eigenfunctions and
related conjectures; Section \ref{proof} provides the proof of the main
results, whereas Section \ref{sec:Legendre moments} contains auxiliary lemmas,
we believe of independent interest, on the asymptotic behaviour of
moments of Legendre polynomials.

\subsection{Acknowledgements}

We wish to thank Ze\'{e}v Rudnick for suggesting the problem, many
stimulating and fruitful discussions and useful comments on an earlier
version of this paper, and Mikhail Sodin for many
stimulating and fruitful discussions. A substantial part of this research
was done during the second author's visit to University of Rome ``Tor
Vergata", and he would like to acknowledge the extremely friendly and
stimulating environment in the institution, and the generous financial
support.

\section{On the proof of Theorem \ref{thm:var=1/l^2}}

\label{sec:on the proof}

To establish our results, we shall need a detailed analysis of the odd
moments of Legendre polynomials
\begin{equation*}
\int\limits_{0}^{1}P_{l}(t )^{2k+1}dt = \int\limits_{0}^{\pi /2}P_{l}(\cos
\theta )^{2k+1}\sin {\theta }d\theta .
\end{equation*}%
The rationale for this can be explained as follows. It is relatively easy to
express (up to a constant) the defect variance as
\begin{equation}
I_{l}=\int\limits_{0}^{\pi /2}\arcsin (P_{l}(\cos \theta ))\sin {\theta }%
d\theta ,  \label{eq:Il def scal}
\end{equation}%
where $\sin {\theta }d\theta $ is (up to a constant) the uniform measure on
the sphere in the spherical coordinates (see Lemma \ref%
{lem:var(Dl)=*int(arcsin)}). It then remains to understand the asymptotic
behaviour of $I_{l}$, a task which is put forward in Proposition \ref%
{prop:Il sim C1/l^2}; here we provide the main ideas underlying its proof.
We know from Hilb's asymptotics (Lemma \ref{lem:Hilb}), that $P_{l}(\cos
\theta )$ has a scaling limit: as $l\rightarrow \infty $, for any \emph{fixed%
}
\begin{equation*}
\psi \in \left[ 0,(l+1/2)\frac{\pi }{2}\right] ,
\end{equation*}%
we have
\begin{equation}
P_{l}\left( \cos \left( \frac{\psi }{l+1/2}\right) \right) \approx
J_{0}(\psi ).  \label{eq:Pl scal}
\end{equation}%
In fact, the latter estimate holds uniformly for $\psi =o(l)$, and it may be
shown that, as a tail of a conditionally convergent integral, the
contribution of the other regime $\psi \gg l$ is negligible (see the proof
of Proposition \ref{prop:Il sim C1/l^2}).

In this sequel we neglect the difference between $l$ and $l+\frac{1}{2}$. It
is then natural to try to replace $P_{l}(\cos \theta )$ in
\eqref{eq:Il def
scal} with its scaling limit; a formal substitution yields heuristically
\begin{equation}
\begin{split}
I_{l}& \approx \frac{1}{l}\int\limits_{0}^{l\pi /2}\arcsin \left(
P_{l}\left( \cos \left( \psi /l\right) \right) \right) \sin \left( \psi
/l\right) d\psi \approx \frac{1}{l}\int\limits_{0}^{l\pi /2}\arcsin
(J_{0}(\psi ))\sin \left( \psi /l\right) d\psi \\
& \approx \frac{1}{l^{2}}\int\limits_{0}^{l\pi /2}\arcsin (J_{0}(\psi ))\psi
d\psi ,
\end{split}
\label{eq:Il approx heurist}
\end{equation}%
where we replaced $\sin \left( \frac{\psi }{l}\right) $ with $\frac{\psi }{l}
$, which is justified for $\psi =o(l)$. This is inconsistent to the
statement of Theorem \ref{thm:var=1/l^2}; the integral
\begin{equation}
\int\limits_{0}^{\infty }\arcsin (J_{0}(\psi ))\psi d\psi ,
\label{eq:int J(psi)*psi}
\end{equation}%
diverges, so some more care is needed to transform \eqref{eq:Il approx
heurist}. Moreover, the last integrand in \eqref{eq:Il approx heurist} is
somewhat different from the integrand in the integral \eqref{eq:C=*int}
defining $C$ (neglecting the constant in front of the integral in %
\eqref{eq:C=*int}) in that in \eqref{eq:C=*int} we subtract $J_{0}(\psi )$
from the arcsine. Note that the factor $\psi $ in both
\eqref{eq:Il approx
heurist} and \eqref{eq:C=*int} is reminiscent of the uniform measure $\sin
\theta d\theta $ on the sphere.

There is a subtlety that explains the latter discrepancy between the
integrand in the definition \eqref{eq:C=*int} of $C$ and the integrand in %
\eqref{eq:Il approx heurist}. One way to justify the substitution of the
scaling limit of $P_{l}(\cos \theta )$ (i.e. the second step of
\eqref{eq:Il
approx heurist}) is expanding the arcsine in the integral of
\eqref{eq:Il
def scal} into the Taylor series around the origin
\begin{equation*}
\arcsin (t)=\sum\limits_{k=0}^{\infty }a_{k}t^{2k+1},
\end{equation*}%
for some explicitly given coefficients $a_{k}$ (see
\eqref{eq:arcsin(t)-t
Tay} and the formula immediately after). We then need to evaluate all the
odd moments of $P_{l}(\cos \theta )$ w.r.t. to the measure $\sin \theta
d\theta $ on $\left[ 0,\frac{\pi }{2}\right] $. It turns out that the
moments (appropriately scaled) are asymptotically equivalent to the
corresponding moments of the Bessel function on $\mathbb{R}_{+}$ w.r.t. the
measure $\psi d\psi $, with the exception for the first moment (Lemma \ref%
{lem:int Pl^k =ck/l^2}). Indeed, $P_{l}(\cos \theta )$ integrates to zero,
whereas $\int\limits_{0}^{\infty }J_{0}(\psi )\psi d\psi $ diverges; this
phenomenon also accounts for the divergence of \eqref{eq:int
J(psi)*psi}. To account for this difference, we need to subtract $J_{0}(\psi
)$ from the arcsine in the integrand \eqref{eq:Il approx heurist}, thus
obtaining a conditionally convergent integral. This is indeed the heuristic
explanation for the discrepancy between \eqref{eq:Il approx heurist} and %
\eqref{eq:C=*int}.

To summarize this discussion, we note the following. The variance is given,
up to an explicit constant, by the integral \eqref{eq:Il def scal}. The
constant $C$ in \eqref{eq:C=*int} is reminiscent of $I_{l}$ up to the
scaling \eqref{eq:Pl scal}: the expression $P_{l}(\cos \theta )$ is replaced
by its scaling limit $J_{0}(\psi )$, and the measure $\sin \theta d\theta $
becomes $\psi d\psi $ on $\mathbb{R}_{+}$ after scaling. Subtracting $%
J_{0}(\psi )$ from the integrand in \eqref{eq:C=*int} accounts for the fact
that for even $l$, the integral of $P_{l}(\cos \theta )$ against $\sin
\theta d\theta $ vanishes, whereas this is not the case of integral of the
scaling limit $J_{0}(\psi )$ against $\psi d\psi $.

\section{Discussion}

\label{sec:discussion}

As reported in Section \ref{sec:prev work}, the asymptotics for
the variance of the excursion sets is given for $z\neq 0$ by
\begin{equation*}
\var(\Phi _{l}(z))\sim z^{2}\phi (z)^{2}\cdot \frac{4\pi^2}{l}
\end{equation*}%
(here $\phi $ is the Gaussian probability density); for the defect ($z=0$),
this gives only an upper bound $o\left( \frac{1}{l}\right) $. The principal
result of the present paper (Theorem \ref{thm:var=1/l^2}) states that the
defect variance \eqref{eq:var=1/l^2} is of order of magnitude $l^{-2}$. Our
explanation for this discrepancy is the disappearance, for $z=0$, of
quadratic term in the Hermite expansion of the function $\mathds{1}
(f_{l}\leq z),$ see ~\cite{MaWi}.

We should compare this situation to the length distribution of the level
curves. For $t\in \mathbb{R}$ let
\begin{equation*}
\mathcal{L}_{l}^{t}=\mathcal{L}^{t}(f_{l})=\mathrm{length}\left(
f_{l}^{-1}(t)\right) .
\end{equation*}%
be the (random) length of the level curve $f_{l}^{-1}(t)$, the most
important case being that of $t=0$; the corresponding level curve is called
the \emph{nodal line}. It is known ~\cite{Wig3} that for $t\in \mathbb{R}$,
the expected length is
\begin{equation*}
\mathbb{E}\left[ \mathcal{L}_{l}^{t}\right] \sim c_{1}e^{-t^{2}/2}\cdot l.
\end{equation*}
The variance is asymptotic to
\begin{equation*}
\var\left( \mathcal{L}_{l}^{t}\right) \sim c_{2}e^{-t^{2}}t^{4}\cdot l
\end{equation*}
for $t\neq 0$, whereas for the nodal length
\begin{equation*}
\var\left( \mathcal{L}_{l}^{0}\right) \sim c_{3}\log {l}.
\end{equation*}
It is therefore natural to conjecture that the discrepancy of the defect and
the empirical measure are somehow related to the discrepancy in the variance
of the nodal length. However, no simple explanation, as in the case of the
defect, is known to explain the discrepancy between the nodal length and the
non-vanishing level curves length.

Note that we may relate $\mathcal{L}_{l}^{t}$ to the excursion sets by the
identity
\begin{equation}
\Phi _{l}(z)=\int\limits_{-\infty }^{z}\mathcal{L}_{l}^{t}dt.
\label{eq:Phil=int lev len}
\end{equation}%
Let us recall that for two random variables $X$, $Y$, the correlation is
defined as
\begin{equation*}
\mathrm{Corr}(X,Y)=\frac{Cov(X,Y)}{\sqrt{\var(X)}\sqrt{\var(Y)}}.
\end{equation*}%
It is known ~\cite{Wig3} that for every $t_{1},t_{2}\in \mathbb{R}$, $%
\mathcal{L}_{l}^{t}$ become asymptotically fully dependent for large $l$, in
the sense that\footnotemark
\begin{equation}
\mathrm{Corr}\left( \mathcal{L}_{l}^{t_{1}},\mathcal{L}_{l}^{t_{2}}\right)
=1-o_{l\rightarrow \infty }(1).  \label{eq:Lt1 corr Lt2}
\end{equation}%
In fact, a much stronger statement regarding the rate of convergence was
proven. It then seems, that the asymptotic degeneracy of $\Phi _{l}(z)$ is,
essentially, an artifact of \eqref{eq:Phil=int lev len} and the asymptotic
full dependence of the level lengths \eqref{eq:Lt1 corr Lt2}.

\footnotetext{%
The asymptotic full dependence \eqref{eq:Lt1 corr Lt2} was proven under the
technical assumption $t_{1},t_{2}\neq 0$; it is natural to conjecture,
though, that a slight modification of the same argument will work if either
of $t_{i}$ vanishes.}

A possible explanation for the phenomenon \eqref{eq:Lt1 corr Lt2} is the
following conjecture due to Mikhail Sodin. Let $x\in \mathcal{S}^{2}$ and
for $t\in \mathbb{R}$ let $\mathcal{L}_{l;x}^{t}=\mathcal{L}_{x}^{t}(f_{l})$
(the ``local length'') be the (random) length of the unique component of $%
f_{l}^{-1}(t)$ that contains $x$ inside (or $0$, if $f_{l}$ does not cross
the level $t$).

\begin{conjecture}[M. Sodin]
\label{conj:loc len full dep} The local lengths are asymptotically fully
dependent in the sense that for every $x\in \mathcal{S}^{2}$ and $%
t_{1},t_{2}\in \mathbb{R}$
\begin{equation*}
\mathrm{Corr}\left( \mathcal{L}_{l;x}^{t_{1}},\mathcal{L}_{l;x}^{t_{2}}%
\right) =1-o_{l\rightarrow \infty }(1).
\end{equation*}
\end{conjecture}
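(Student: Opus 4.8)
Since this is stated as an open problem, I can only outline a plausible line of attack and indicate where it stalls. The natural first step is to pass to the scaling limit. Fix $x\in\sphere$ and work in geodesic normal coordinates centred at $x$, rescaling by a factor $l$: set $g_{l}(u):=f_{l}(\exp_{x}(u/l))$ for $u$ in a fixed compact of $\mathbb{R}^{2}$. By Hilb's asymptotics (Lemma \ref{lem:Hilb}) and \eqref{eq:Pl scal}, the covariance of $g_{l}$ converges to $J_{0}(|u-v|)$, which is the covariance of Berry's isotropic monochromatic random wave $W$ on $\mathbb{R}^{2}$; the convergence can be upgraded to joint convergence of $g_{l}$ together with its first derivatives, locally uniformly. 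Under this rescaling the component of $f_{l}^{-1}(t)$ enclosing $x$ becomes the loop of $W^{-1}(t)$ enclosing the origin, and $l\cdot\mathcal{L}_{l;x}^{t}\rightarrow_{d}\Lambda^{t}$, the length of that limiting loop. If one can upgrade this to $L^{2}$-convergence (i.e. prove uniform integrability of $(l\,\mathcal{L}_{l;x}^{t})^{2}$, for instance through a second-moment Kac--Rice bound), then
$$\corr\left(\mathcal{L}_{l;x}^{t_{1}},\mathcal{L}_{l;x}^{t_{2}}\right)\longrightarrow \corr\left(\Lambda^{t_{1}},\Lambda^{t_{2}}\right),$$
so the conjecture would reduce to the purely planar statement $\corr(\Lambda^{t_{1}},\Lambda^{t_{2}})=1$ for the Berry wave $W$.

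The second step would be to analyse the limiting loop lengths $\Lambda^{t}$ by the same chaos-expansion philosophy that governs the global lengths $\mathcal{L}_{l}^{t}$. Writing $\Lambda^{t}$ through a Kac--Rice representation $\Lambda^{t}=\int \delta(W(u)-t)\,|\nabla W(u)|\,\mathds{1}_{\{u\in\gamma_{t}\}}\,du$, where $\gamma_{t}$ is the loop around the origin, one would expand the fluctuation of $\Lambda^{t}$ into Wiener chaoses with respect to $W$ and hope to isolate a single dominant component of the form $\beta(t)\cdot\Xi$, with $\Xi$ a universal (level-independent) random variable and $\beta(t)$ deterministic. Exactly as in \eqref{eq:Lt1 corr Lt2}, full dependence would follow if this dominant term exhausted the variance of $\Lambda^{t}$ for all $t$.

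The hard part --- and the reason the statement is recorded only as a conjecture --- lies in two places. First, ``the component enclosing $x$'' is a topological, genuinely nonlocal condition: it is not a local Kac--Rice integrand, so the factor $\mathds{1}_{\{u\in\gamma_{t}\}}$ above depends on the global nodal picture, and one must control the exceptional events on which the topology of the loop around the origin changes between the levels $t_{1}$ and $t_{2}$ (mergers, splittings, or disappearance of the enclosing loop). Second, and more fundamentally, the global full dependence \eqref{eq:Lt1 corr Lt2} is produced by summing $\sim l^{2}$ individually weakly-correlated loops whose \emph{collective} second-order fluctuation happens to be level-proportional; for a single loop this averaging mechanism is absent, and it is far from clear --- arguably even doubtful at the crude level of the scaling heuristic --- that $\corr(\Lambda^{t_{1}},\Lambda^{t_{2}})=1$. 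Thus the real content of Conjecture \ref{conj:loc len full dep} is to exhibit a degeneracy in the level-dependence of a \emph{single} loop's length that is invisible to the naive scaling picture; producing (or refuting) such a mechanism is precisely the open problem.
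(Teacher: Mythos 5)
You should first be aware that the paper contains no proof of this statement: it is an open conjecture, attributed to M.~Sodin, and the paper's entire treatment of it consists of a single heuristic remark, namely that Conjecture \ref{conj:loc len full dep} is \emph{stronger} than the global full-correlation statement \eqref{eq:Lt1 corr Lt2}, because the global length $\mathcal{L}_{l}^{t}$ can be viewed as a summation of the local lengths $\mathcal{L}_{l;x}^{t}$ over points $x\in\sphere$, so that local full dependence would propagate to global full dependence and hence, via \eqref{eq:Phil=int lev len}, to the degeneracy of the empirical measure. You correctly recognized that the statement is open and made no unjustified claims, which is the right posture; there is simply no proof in the paper to compare your attempt against.

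As a research outline, what you wrote is reasonable and goes well beyond the paper's one-sentence heuristic. The scaling-limit reduction (Hilb's asymptotics, Lemma \ref{lem:Hilb}, giving $J_{0}$ as the limiting covariance, hence Berry's planar monochromatic wave $W$ as the local model, with $l\cdot\mathcal{L}_{l;x}^{t}$ converging to the length of the loop of $W^{-1}(t)$ enclosing the origin) is the natural first step, and the two obstructions you identify are the genuine ones: the indicator of ``the component enclosing $x$'' is a nonlocal, topological constraint invisible to any Kac--Rice integrand, and the averaging over order $l^{2}$ weakly correlated components that produces \eqref{eq:Lt1 corr Lt2} has no analogue for a single loop. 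Note that your observation that the global statement does not imply the local one is precisely the converse of the paper's remark that the local statement would imply the global one; the two are consistent and together locate the content of the conjecture exactly where you place it. Two caveats on your proposed reduction, which you only partially flag: upgrading convergence in distribution to convergence of correlations requires uniform square-integrability of $l\,\mathcal{L}_{l;x}^{t}$, and this is not obviously supplied by a Kac--Rice second-moment bound because of the same nonlocal indicator; and one must also know that the limiting enclosing loop is almost surely well defined (compact and nondegenerate at the levels $t_{1},t_{2}$) for Berry's wave, which is itself a nontrivial input. Neither caveat affects your conclusion that the statement remains open.
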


Heuristically, Conjecture \ref{conj:loc len full dep} is stronger than %
\eqref{eq:Lt1 corr Lt2} (and thus the asymptotic degeneracy of the empirical
measure via \eqref{eq:Phil=int lev len}), since $\mathcal{L}^{t}$ can be
viewed as a summation of $\mathcal{L}_{x}^{t}$ over a set of points $x\in
\mathcal{S}^{2}$. More rigorously, we should assume that $f_{l}(x)\neq t $;
it is immediate to see that this is satisfied almost surely.

\section{Proof of Theorem \ref{thm:var=1/l^2}}

\label{proof}

The first Lemma is probably already known, but we failed to locate a direct
reference and thus we report it for completeness. The proof is based on
standard properties of Gaussian random variables.

\begin{lemma}
\label{lem:var(Dl)=*int(arcsin)} For $l$ even we have
\begin{equation}
\var(\mathcal{D}_{l})=32\pi \int\limits_{0}^{\pi /2}\arcsin
(P_{l}(\cos \theta ))\sin {\theta }d\theta .  \label{eq:var(Dl)=*int(arcsin)}
\end{equation}
\end{lemma}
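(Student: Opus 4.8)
**

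The plan is to express the defect variance directly as a double integral over the sphere and then exploit isotropy to collapse it to a one-dimensional integral. Starting from $\mathcal{D}_l = \int_{\mathcal{S}^2} \mathcal{H}(f_l(x))\,dx$, I would write
\begin{equation*}
\var(\mathcal{D}_l) = \int_{\mathcal{S}^2}\int_{\mathcal{S}^2} \E\left[\mathcal{H}(f_l(x))\mathcal{H}(f_l(y))\right] dx\,dy,
\end{equation*}
using that $\E[\mathcal{D}_l]=0$ (so the product of expectations vanishes). The integrand depends on the pair $(x,y)$ only through the joint law of $(f_l(x),f_l(y))$, which by \eqref{eq:rl coval def} is a centred bivariate Gaussian with unit variances and correlation $\rho = r_l(x,y) = P_l(\cos d(x,y))$. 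Thus the entire problem reduces to computing $\E[\mathcal{H}(U)\mathcal{H}(V)]$ for a standard bivariate Gaussian $(U,V)$ with correlation $\rho$.

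\textbf{The key computation.} The central identity is the classical fact that for a standard bivariate Gaussian with correlation $\rho$,
\begin{equation*}
\E[\sgn(U)\sgn(V)] = \frac{2}{\pi}\arcsin(\rho).
\end{equation*}
Since $\mathcal{H}(t) = \sgn(t)$ almost everywhere (the set $\{f_l=0\}$ has measure zero a.s. and contributes nothing), the inner expectation equals $\frac{2}{\pi}\arcsin(P_l(\cos d(x,y)))$. I would prove the $\arcsin$ formula either by differentiating $\E[\sgn(U)\sgn(V)]$ with respect to $\rho$ (Price's theorem / the Gaussian integration-by-parts identity $\partial_\rho \E[g(U)h(V)] = \E[g'(U)h'(V)]$, which produces $\frac{2}{\pi}(1-\rho^2)^{-1/2}$ upon using $\sgn' = 2\delta_0$) and integrating from $\rho=0$, or by a direct geometric argument in the plane (the probability that two correlated Gaussians share a sign is determined by the angle between the relevant half-planes). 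Substituting back,
\begin{equation*}
\var(\mathcal{D}_l) = \frac{2}{\pi}\int_{\mathcal{S}^2}\int_{\mathcal{S}^2} \arcsin\left(P_l(\cos d(x,y))\right) dx\,dy.
\end{equation*}

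\textbf{Collapsing to one dimension.} By isotropy the integrand depends only on the geodesic distance $d(x,y)$. Fixing $x$ and integrating over $y$ in spherical coordinates centred at $x$, the inner integral becomes $2\pi \int_0^\pi \arcsin(P_l(\cos\theta))\sin\theta\,d\theta$, independent of $x$; integrating the trivial $x$-dependence contributes the total area $4\pi$. This yields a factor $\frac{2}{\pi}\cdot 4\pi\cdot 2\pi = 16\pi$ times $\int_0^\pi \arcsin(P_l(\cos\theta))\sin\theta\,d\theta$. Finally, since $l$ is even, $P_l$ is an even polynomial, so $P_l(\cos\theta) = P_l(\cos(\pi-\theta))$ and the integrand is symmetric about $\theta = \pi/2$; folding the integral over $[\pi/2,\pi]$ onto $[0,\pi/2]$ gives an extra factor of $2$, producing the constant $32\pi$ and the stated range $[0,\pi/2]$.

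I expect the only genuinely substantive step to be the $\arcsin$ identity; everything else is bookkeeping with Fubini, the symmetry of the Gaussian measure to kill the $\E[\mathcal{D}_l]^2$ term, and the change of variables enforcing isotropy. The mildest technical care needed is justifying the interchange of expectation and the double spatial integral (dominated convergence / boundedness of $\mathcal{H}$ suffices since $|\mathcal{H}|\le 1$ and the sphere has finite measure), and confirming that the even parity of $l$ is exactly what is required to merge the two hemispheres and arrive at the clean factor $32\pi$ rather than $16\pi$.
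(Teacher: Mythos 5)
Your proposal is correct and follows essentially the same route as the paper's own proof: both reduce $\var(\mathcal{D}_l)$ to a double integral via Fubini, invoke the classical bivariate Gaussian identity $\E[\mathcal{H}(U)\mathcal{H}(V)]=\frac{2}{\pi}\arcsin(\rho)$ (which the paper simply cites from Rice rather than reproving), use isotropy and spherical coordinates to collapse to a single integral over $[0,\pi]$, and use the evenness of $P_l$ for even $l$ to fold onto $[0,\pi/2]$ and obtain the factor $32\pi$. The constants all check out, so there is nothing to correct.
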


We postpone the proof of Lemma \ref{lem:var(Dl)=*int(arcsin)} until the end
of the present section. Let us denote the integral in %
\eqref{eq:var(Dl)=*int(arcsin)}
\begin{equation}
I_{l}=\int\limits_{0}^{\pi /2}\arcsin (P_{l}(\cos \theta ))\sin {\theta }%
d\theta ,  \label{eq:Il def}
\end{equation}%
so that evaluating the defect variance is equivalent to evaluating $I_{l}$,
which is done in the following proposition (for even $l$):

\begin{proposition}
\label{prop:Il sim C1/l^2} As $l\rightarrow \infty $ along even numbers, we
have
\begin{equation}
I_{l}=\frac{C_{1}}{l^{2}}+o_{l\rightarrow \infty }\left( \frac{1}{l^{2}}%
\right) ,  \label{eq:Il=C1/l^2+o}
\end{equation}%
where
\begin{equation}
C_{1}=\int\limits_{0}^{\infty }\psi \left( \arcsin (J_{0}(\psi ))-J_{0}(\psi
)\right) d\psi .  \label{eq:C1 def}
\end{equation}%
Moreover, the constant $C_{1}$ is strictly positive.
\end{proposition}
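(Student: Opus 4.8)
The plan is to prove Proposition \ref{prop:Il sim C1/l^2} by rigorously justifying the heuristic outlined in Section \ref{sec:on the proof}, where the main technical challenge is that a naive substitution of the scaling limit produces a divergent integral, so the subtraction of $J_0(\psi)$ must be tracked carefully at every step. I would begin with the change of variables $\theta = \psi/(l+1/2)$ in \eqref{eq:Il def}, writing
\begin{equation*}
I_l = \frac{1}{l+1/2}\int\limits_0^{(l+1/2)\pi/2}\arcsin\!\left(P_l\!\left(\cos\frac{\psi}{l+1/2}\right)\right)\sin\!\left(\frac{\psi}{l+1/2}\right)d\psi,
\end{equation*}
and then split the domain of integration into a bulk region $\psi \le A(l)$ and a tail region $\psi > A(l)$ for a suitable cutoff such as $A(l) = l^{\alpha}$ with $\alpha$ slightly less than $1$, chosen so that Hilb's asymptotics \eqref{eq:Pl scal} holds uniformly on the bulk while the tail remains controllable.

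For the bulk region my strategy is to expand $\arcsin$ into its Taylor series $\sum_{k}a_k t^{2k+1}$ and integrate term by term, invoking Lemma \ref{lem:int Pl^k =ck/l^2} to replace each scaled odd moment of $P_l(\cos\theta)$ by the corresponding moment of the Bessel function against $\psi\,d\psi$. The crucial subtlety is that the \emph{first} moment behaves differently: $\int_0^{\pi/2}P_l(\cos\theta)\sin\theta\,d\theta$ vanishes for even $l$, whereas $\int_0^\infty J_0(\psi)\psi\,d\psi$ diverges. I would isolate the $k=0$ term (the linear-in-$t$ contribution of $\arcsin$, whose coefficient $a_0=1$) and handle it separately, so that summing the remaining terms reproduces precisely the integrand $\arcsin(J_0(\psi)) - J_0(\psi)$ appearing in \eqref{eq:C1 def}. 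Concretely, I expect to write $\arcsin(P_l(\cos\theta)) = P_l(\cos\theta) + (\arcsin(P_l(\cos\theta)) - P_l(\cos\theta))$, use the vanishing of the first-moment integral to discard the linear piece exactly, and then apply the scaling limit to the bracketed remainder, which corresponds term-by-term to the convergent series for $\arcsin(J_0(\psi)) - J_0(\psi)$. After replacing $\sin(\psi/(l+1/2))$ by $\psi/(l+1/2)$ (legitimate on the bulk, contributing an error of lower order), this yields $\frac{1}{l^2}\int_0^\infty \psi(\arcsin(J_0(\psi)) - J_0(\psi))\,d\psi = C_1/l^2$, up to $o(1/l^2)$.

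The hard part will be controlling the tail region $\psi > A(l)$ and proving that the conditionally convergent integral defining $C_1$ is both finite and strictly positive. For the tail I would use the uniform decay estimate $J_0(\psi) = O(\psi^{-1/2})$ together with comparable decay bounds on $P_l(\cos\theta)$ away from the pole (obtained from Hilb's asymptotics or classical Legendre bounds), and exploit the oscillatory cancellation in $\arcsin(J_0(\psi)) - J_0(\psi) \approx \frac{1}{6}J_0(\psi)^3 + \cdots$, whose leading term decays like $\psi^{-3/2}$ and hence is absolutely integrable against $\psi\,d\psi$ only after noting the oscillation produces additional cancellation; this is where careful integration by parts against the oscillatory phase of $J_0$ should salvage conditional convergence. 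Finally, the strict positivity $C_1 > 0$ I would establish via the explicit lower bound $C > 32/\sqrt{27}$ promised in Lemma \ref{lem:C>2/pi^2/sqrt(27)}, for instance by bounding the contribution of the first hump of $J_0$ (where $J_0(\psi) - \arcsin(J_0(\psi))$ has a definite sign owing to $\arcsin(t) \ge t$ for $t \ge 0$) against the negative contributions from the decaying oscillatory tail.
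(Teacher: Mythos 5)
Your skeleton is the same as the paper's: write $\arcsin(P_{l}(\cos\theta))=P_{l}(\cos\theta)+\left(\arcsin(P_{l}(\cos\theta))-P_{l}(\cos\theta)\right)$, kill the linear term exactly using $\int_{0}^{\pi/2}P_{l}(\cos\theta)\sin\theta\,d\theta=0$ for even $l$, expand the remainder into the series $\sum_{k\ge1}a_{k}t^{2k+1}$, and feed in Lemma \ref{lem:int Pl^k =ck/l^2}. But there is a genuine gap at the central step: you invoke Lemma \ref{lem:int Pl^k =ck/l^2} ``term by term'', whereas that lemma gives, for each \emph{fixed} $j$, an error $o_{j}(1/l^{2})$ whose dependence on $j$ is uncontrolled; summing such errors over infinitely many $k$ does not yield $o(1/l^{2})$, so some uniform-in-$k$ bound is indispensable. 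The paper supplies it by truncating the series at level $m$ and bounding the tail via $|P_{l}(t)|\le 1$ (hence $|P_{l}|^{2k+1}\le|P_{l}|^{5}$ for $k\ge 2$) together with the uniform fifth-absolute-moment bound \eqref{eq:int Pl^k << 1/l^2} of Lemma \ref{lem:int Pl^5 << 1/l^2}, giving $\sum_{k>m}a_{k}\int_{0}^{\pi/2}|P_{l}(\cos\theta)|^{2k+1}\sin\theta\,d\theta\ll 1/(\sqrt{m}\,l^{2})$ uniformly in $l$, after which $m\to\infty$. Your bulk/tail split in $\psi$ attacks a different issue (the spatial tail), which is already handled \emph{inside} the proof of the moment lemma you quote as given; it does nothing for the $k$-tail of the series, which is where the danger lies. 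Note also that working term by term with the polynomials $t^{2k+1}$ spares you from pushing Hilb's asymptotics through $\arcsin$ itself, whose derivative blows up at $\pm1$ (i.e.\ near $\theta=0$, where $P_{l}(\cos\theta)$ is close to $1$), so your ``substitute the scaling limit into the bracketed remainder on the bulk'' step would itself need extra care there. Finally, identifying $\sum_{k\ge1}a_{k}c_{2k+1}$ with the integral \eqref{eq:C1 def} requires justifying an exchange of sum and integral in which the $k=1$ term is only conditionally convergent; the paper does this explicitly.

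The second gap is positivity. Deducing $C_{1}>0$ from Lemma \ref{lem:C>2/pi^2 /sqrt(27)} is circular: in the paper that lemma is a \emph{corollary} of the proposition, its proof resting on the identity $C_{1}=C_{2}=\sum_{k\ge1}a_{k}c_{2k+1}$, the nonnegativity of the $c_{j}$, and the value of $c_{3}$. Your fallback --- bounding the first hump of $J_{0}$ against the oscillatory tail --- is not carried out and is genuinely delicate: the integrand $\psi\left(\arcsin(J_{0}(\psi))-J_{0}(\psi)\right)$ changes sign with $J_{0}$ and its envelope decays only like $\psi^{-1/2}$, so one would need explicit numerical control of a conditionally convergent tail to beat the hump. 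The paper's argument is structural and avoids this entirely: every $a_{k}$ is positive, and the crucial input $c_{j}\ge0$ comes not from the Bessel integral \eqref{eq:ck def} but from Lemma \ref{MaWiProp}, which represents the Legendre moments as $(2l+1)^{-1}$ times sums of \emph{squares} of Clebsch--Gordan coefficients, hence nonnegative, with limits $c_{j}$ inheriting nonnegativity; together with the explicit value $c_{3}=2/(\pi\sqrt{3})>0$ this gives $C_{1}\ge a_{1}c_{3}>0$. You need either to import that lemma or to find an independent proof that each $c_{2k+1}\ge 0$.
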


\begin{proof}[Proof of Theorem \ref{thm:var=1/l^2} assuming Proposition \ref%
{prop:Il sim C1/l^2}]
Formula \eqref{eq:var(Dl)=*int(arcsin)} together with Proposition \ref%
{prop:Il sim C1/l^2} yields
\begin{equation*}
\var(\mathcal{D}_{l})=\frac{C}{l^{2}}\left( 1+o_{l\rightarrow \infty
}(1)\right) .
\end{equation*}%
The positivity of the constant $C=32\pi C_{1}$ follows directly from the
positivity of $C_{1}$, which is claimed in Proposition \ref{prop:Il sim
C1/l^2}.
\end{proof}

\begin{proof}[Proof of Proposition \ref{prop:Il sim C1/l^2}]
As it was explained in Section \ref{sec:on the proof}, to extract the
asymptotics of $I_{l}$, we will expand the arcsine on the RHS of %
\eqref{eq:Il def} into the Taylor series around the origin; we will
encounter only the odd moments of $P_{l}(\cos \theta )$, due to the arcsine
being an odd function. As it was pointed out in Section \ref{sec:on the
proof}, the function $P_{l}(\cos \theta )$ differ from its scaling limit in
that the integral of the former vanishes, whereas the integral of the latter
diverges; all the other odd moments of $P_{l}(\cos\theta)$ are asymptotic to
the (properly scaled) corresponding moments of the Bessel function, (see
Lemma \ref{lem:int Pl^k =ck/l^2}). To account for this discrepancy, we
subtract $P_{l}\left( \cos \theta \right) $ from the integrand and add it
back separately. To this end we write
\begin{equation*}
\arcsin (P_{l}(\cos \theta ))=\left( \arcsin (P_{l}(\cos \theta
))-P_{l}(\cos \theta )\right) +P_{l}(\cos \theta ),
\end{equation*}%
so that the vanishing of the integral of the latter
\begin{equation*}
\int\limits_{0}^{\pi /2}P_{l}(\cos \theta )\sin {\theta }d\theta =0,
\end{equation*}%
for even $l$ implies
\begin{equation}
I_{l}=\int\limits_{0}^{\pi /2}\left( \arcsin (P_{l}(\cos \theta
))-P_{l}(\cos \theta )\right) \sin {\theta }d\theta .
\label{eq:Il add subtract Pl}
\end{equation}%
The advantage of the latter representation \eqref{eq:Il add subtract Pl}
over \eqref{eq:Il def} is that the only powers that will appear in the
Taylor expansion of the arcsine on the right-hand side of \eqref{eq:Il add
subtract Pl} are of order $\ge 3$, so that the moments of $%
P_{l}(\cos \theta )$ are all identical to the corresponding moments of the
scaling limit. Intuitively, this means that we may replace the appearances
of $P_{l}(\cos \theta)$ in \eqref{eq:Il add subtract Pl} by its the scaling
limit. The rest of the present proof is a rigorous argument that will establish the
latter statement.

Let
\begin{equation}
\arcsin (t)-t=\sum\limits_{k=1}^{\infty }a_{k}t^{2k+1},
\label{eq:arcsin(t)-t Tay}
\end{equation}%
where
\begin{equation}
a_{k}=\frac{(2k)!}{4^{k}(k!)^{2}(2k+1)}  \label{eq:ak arcsine def}
\end{equation}%
are the Taylor coefficients of the arcsine. Note that all the terms in the
expansion \eqref{eq:arcsin(t)-t Tay} are positive, and by the Stirling
formula, the coefficients are asymptotic to
\begin{equation}
a_{k}\sim \frac{c}{k^{3/2}}  \label{eq:ak = 1/k^3/2}
\end{equation}%
for some $c>0$, so that, in particular, the Taylor series %
\eqref{eq:arcsin(t)-t Tay} is uniformly absolutely convergent. Therefore we
may write
\begin{equation}
I_{l}=\sum\limits_{k=1}^{\infty }a_{k}\int\limits_{0}^{\pi /2}P_{l}(\cos
\theta )^{2k+1}\sin {\theta }d\theta .  \label{eq:Il=sum ak int Pl^2k+1}
\end{equation}

We know from Lemma \ref{lem:int Pl^k =ck/l^2}, that for every $k\geq 1$,
\begin{equation}
\int\limits_{0}^{\pi /2}P_{l}(\cos \theta )^{2k+1}\sin {\theta }d\theta \sim
\frac{c_{2k+1}}{l^{2}},  \label{eq:Pl mom 2k+1 sim ck/l^2}
\end{equation}%
with $c_{2k+1}$ given by \eqref{eq:ck def}; comparing this result to %
\eqref{eq:Il=sum ak int Pl^2k+1}, it is natural to expect that
\begin{equation}
I_{l}=\frac{C_{2}}{l^{2}}+o\left( \frac{1}{l^{2}}\right) ,
\label{eq:Il = C2/l^2+o}
\end{equation}%
where
\begin{equation}
C_{2}=\sum\limits_{k=1}^{\infty }a_{k}c_{2k+1}.  \label{eq:C2 def}
\end{equation}

We are going to formally prove \eqref{eq:Il = C2/l^2+o} immediately;
however, first we evaluate the constant $C_{2}$ by summing up the series in %
\eqref{eq:C2 def}, and validate that indeed $C_{2}=C_{1}$ in
\eqref{eq:C1
def}, as claimed. We plug \eqref{eq:ck def} into \eqref{eq:C2 def} to
formally compute
\begin{equation*}
\begin{split}
C_{2} = &\sum\limits_{k=1}^{\infty} a_{k} \int\limits_{0}^{\infty} \psi
J_{0}(\psi)^{2k+1} d\psi = \int\limits_{0}^{\infty} \psi \cdot \left(
\sum\limits_{k=1}^{\infty} a_{k} J_{0}(\psi)^{2k+1}\right) d\psi \\
&= \int\limits_{0}^{\infty} \psi \cdot\left(
\arcsin(J_{0}(\psi))-J_{0}(\psi) \right) d\psi = C_{1},
\end{split}%
\end{equation*}
where to obtain the third equality, we used \eqref{eq:arcsin(t)-t Tay}
again. To justify the exchange of the summation and integration order, we
consider the finite summation
\begin{equation*}
\sum\limits_{k=1}^{m}a_{k}\int\limits_{0}^{\infty} \psi J_{0}(\psi)^{2k+1}
d\psi,
\end{equation*}
using \eqref{eq:ak = 1/k^3/2} and \eqref{eq:J(psi)<<1/sqrt(psi)} to bound
the contribution of tails, and take the limit $m\rightarrow\infty$.

We now turn to prove \eqref{eq:Il = C2/l^2+o}. To this end we expand %
\eqref{eq:arcsin(t)-t Tay} into a \emph{finite} degree Taylor polynomial
while controlling the tail using Lemma \ref{lem:int Pl^5 << 1/l^2}. Indeed,
using \eqref{eq:ak = 1/k^3/2} and \eqref{eq:int Pl^k << 1/l^2}, we easily
obtain
\begin{equation*}
\begin{split}
\sum_{k=m+1}^{\infty} a_{k}\int\limits_{0}^{\pi/2} \left|
P_{l}(\cos\theta)\right|^{2k+1} \sin{\theta}d\theta &\le
\sum_{k=m+1}^{\infty} a_{k}\int\limits_{0}^{\pi/2} \left|
P_{l}(\cos\theta)\right|^{5} \sin{\theta}d\theta \\
&\ll \frac{1}{l^2} \cdot \sum\limits_{k=m+1}\frac{1}{k^{3/2}} \ll \frac{1}{%
\sqrt{m}l^{2}},
\end{split}%
\end{equation*}
since $|P_{l}(t)| \le 1$ for every $l$ and $t\in [-1,1]$, so that $%
|P_{l}(t)|^{k}$ is monotone decreasing with $k$.

We then have for every $m$ ($m=m(l)$ to be chosen)
\begin{equation}  \label{eq:Il=sum ak int fin}
I_{l} = \sum\limits_{k=1}^{m}a_{k} \int\limits_{0}^{\pi/2}
P_{l}(\cos\theta)^{2k+1}\sin{\theta}d\theta + O\left( \frac{1}{\sqrt{m} l^{2}%
} \right),
\end{equation}
and plugging \eqref{eq:Pl mom 2k+1 sim ck/l^2} (a direct consequence of
Lemma \ref{lem:int Pl^k =ck/l^2}) into \eqref{eq:Il=sum ak int fin} finally
yields
\begin{equation}  \label{eq:Il=C2,m/l^2+o}
I_{l} = C_{2,m} \cdot \frac{1}{l^2} + o_{m}\left( \frac{1}{l^2} \right) +
\frac{1}{\sqrt{m}l^{2}}
\end{equation}
with
\begin{equation*}
C_{2,m} = \sum\limits_{k=1}^{m}a_{k}c_{2k+1}.
\end{equation*}
It is clear that \eqref{eq:Il=C2,m/l^2+o} implies \eqref{eq:Il = C2/l^2+o}
(recall the definition \eqref{eq:C2 def} of $C_{2}$ and note that as $%
m\rightarrow\infty$, $C_{2,m}\rightarrow C_{2}$), which concludes the proof
of the statement \eqref{eq:Il=C1/l^2+o} of the present proposition.

It then remains to prove the positivity of the constant $C_{1}$. While its
nonnegativity $C_{1}\ge 0$ is clear since it is the leading constant for the
integral $I_{1}$, which, up to an explicit positive constant, equals the
variance of a random variable via \eqref{eq:var(Dl)=*int(arcsin)}, the
strict positivity is less obvious. To this end, we recall that $C_{1} =
C_{2} $, the latter being given by \eqref{eq:C2 def}. Note that all the
Taylor coefficients $a_{k}$ of arcsine are positive (see
\eqref{eq:ak
arcsine def}). Hence the positivity of $C_{2}$ (and thus also, of $C_{1}$)
follows from the second statement of Lemma \ref{lem:int Pl^k =ck/l^2}, which
claims that $c_{k} \ge 0$ are nonnegative, and $c_{3} > 0$ is explicitly
given.
\end{proof}

\begin{proof}[Proof of Lemma \ref{lem:var(Dl)=*int(arcsin)}]
The result of the present lemma is an artifact of the following general fact
(see e.g. ~\cite{Rice1,Rice2}). Let $(X_{1},X_{2})$ be a $2$-variate centred
Gaussian random variable with covariance matrix
\begin{equation*}
\left(
\begin{matrix}
1 & r \\
r & 1%
\end{matrix}
\right),
\end{equation*}
$|r|\le 1$, and for $i=1,2$ define the random variables (recall that $H(t)$
is the Heaviside function \eqref{eq:heaviside def})
\begin{equation*}
B_{i} = H(X_{i}) =
\begin{cases}
1 & X_{i} > 0 \\
-1 & X_{i} < 0%
\end{cases}
.
\end{equation*}
Then $B_{i}$ are both mean zero with covariance given by
\begin{equation*}
Cov(B_{1}, B_{2}) = \frac{2}{\pi}\arcsin(r).
\end{equation*}

Using the definition \eqref{eq:defect def} of the defect, we may exchange
the order of taking the expectation and integrating to write
\begin{equation}
\begin{split}
\var(\mathcal{D}_{l})& =\mathbb{E}[\mathcal{D}_{l}^{2}]=\mathbb{E}%
\iint\limits_{\mathcal{S}^{2}\times \mathcal{S}%
^{2}}H(f_{l}(x))H(f_{l}(y))dxdy \\
& =\iint\limits_{\mathcal{S}^{2}\times \mathcal{S}^{2}}\mathbb{E}\left[
H(f_{l}(x))H(f_{l}(y))\right] dxdy=4\pi \int\limits_{\mathcal{S}^{2}}\mathbb{%
E}\left[ H(f_{l}(N))H(f_{l}(x))\right] dx,
\end{split}
\label{eq:defect double int}
\end{equation}%
by the isotropic property of the random field $f_{l}$, where $N$ is the
northern pole. Note that for every $x\in \mathcal{S}^{2}$, $f_{l}(N)$ and $%
f_{l}(x)$ are jointly Gaussian, centred, with unit variance and covariance
equal to
\begin{equation*}
\mathbb{E}\left[ f_{l}(N)\cdot f_{l}(x)\right] =r_{l}(N,x)=P_{l}(\cos \theta
),
\end{equation*}%
where $(\theta ,\phi )$ are the spherical coordinates of $x$; the latter
follows from the definition \eqref{eq:rl coval def} of the covariance
function. Therefore, as it was explained earlier, for every $x,y\in \mathcal{%
S}^{2}$,
\begin{equation*}
\mathbb{E}\left[ H(f_{l}(x))H(f_{l}(y))\right] =\frac{2}{\pi }\arcsin
(P_{l}(\cos \theta )).
\end{equation*}%
We then evaluate the latter integral in \eqref{eq:defect double int} in the
spherical coordinates as
\begin{equation*}
\var(\mathcal{D}_{l})=8\pi ^{2}\int\limits_{0}^{\pi }\frac{2}{\pi }%
\arcsin (P_{l}(\cos \theta ))\sin \theta d\theta =16\pi \int\limits_{0}^{\pi
}\arcsin (P_{l}(\cos \theta ))\sin \theta d\theta ,
\end{equation*}%
which, taking into account $l$ being even (and thus $P_{l}(t)$ is also
even), is the statement \eqref{eq:var(Dl)=*int(arcsin)} of the present lemma.
\end{proof}

\section{Moments of Legendre polynomials}

\label{sec:Legendre moments}

We start by recalling a basic fact on the asymptotic behaviour of Legendre
polynomials (see for instance \cite{szego}); as usual, we shall denote by $%
J_{\nu }$ the Bessel function of the first kind.

\begin{lemma}[Hilb's asymptotics]
\label{lem:Hilb} For any $\epsilon > 0$, $C>0$ we have
\begin{equation}  \label{eq:Hilb}
P_{l}(\cos{\theta}) = \left( \frac{\theta}{\sin(\theta)}
\right)^{1/2}J_{0}((l+1/2)\theta) +\delta(\theta),
\end{equation}
where
\begin{equation}  \label{eq:delta bound}
\delta(\theta) \ll
\begin{cases}
\theta^{1/2}l^{-3/2} & \theta > \frac{C}{l} \\
\theta^2 & 0 < \theta < \frac{C}{l}%
\end{cases}%
.
\end{equation}
uniformly w.r.t. $l\ge 1$, $\theta \in [0,\pi-\epsilon].$
\end{lemma}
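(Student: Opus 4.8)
The plan is to derive \eqref{eq:Hilb} from the classical Liouville--Green (Langer) method, treating Legendre's equation as a bounded perturbation of Bessel's equation; this is the route of Szeg\H{o} \cite{szego}, which I sketch for completeness. First I would pass from $t=\cos\theta$ to the variable $\theta$, turning Legendre's equation $(1-t^{2})P_{l}''-2tP_{l}'+l(l+1)P_{l}=0$ into
\begin{equation*}
\frac{d^{2}P_{l}}{d\theta^{2}}+\cot\theta\,\frac{dP_{l}}{d\theta}+l(l+1)P_{l}=0.
\end{equation*}
The Liouville substitution $u(\theta)=(\sin\theta)^{1/2}P_{l}(\cos\theta)$ eliminates the first-order term and produces the normal form
\begin{equation*}
u''+\left(\nu^{2}+\frac{1}{4\sin^{2}\theta}\right)u=0,\qquad \nu:=l+\tfrac12,
\end{equation*}
where I use the identity $l(l+1)+\tfrac14=\nu^{2}$.

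The comparison equation is the one governing the Bessel approximant. Setting $w(\theta)=\theta^{1/2}J_{0}(\nu\theta)$ and invoking Bessel's equation for $J_{0}$, one verifies that
\begin{equation*}
w''+\left(\nu^{2}+\frac{1}{4\theta^{2}}\right)w=0,
\end{equation*}
so $u$ and $w$ solve equations whose potentials differ only by $\chi(\theta):=\tfrac14\sin^{-2}\theta-\tfrac14\theta^{-2}$, a real-analytic function that stays bounded on $[0,\pi-\epsilon]$ (indeed $\chi(\theta)\to\tfrac{1}{12}$ as $\theta\to0$). Introducing the second, singular solution $\widetilde{w}(\theta)=\theta^{1/2}Y_{0}(\nu\theta)$ of the comparison equation, variation of parameters converts the normal form into the Volterra integral equation
\begin{equation*}
u(\theta)=w(\theta)-\int_{0}^{\theta}\frac{w(s)\,\widetilde{w}(\theta)-\widetilde{w}(s)\,w(\theta)}{\mathcal{W}}\,\chi(s)\,u(s)\,ds,
\end{equation*}
where $\mathcal{W}$ is the explicit Wronskian of the comparison system. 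Regularity of $P_{l}$ at $\theta=0$, together with $J_{0}(0)=1$ and $(\theta/\sin\theta)^{1/2}\to1$, fixes the normalization so that the leading term is exactly $w$ and no multiple of the singular solution $\widetilde{w}$ appears.

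The remainder $R:=u-w$ then satisfies a Volterra equation driven by $\chi\cdot w$, which I would resolve by successive approximations; the Neumann series converges because $\chi$ is bounded and the kernel is controlled by the standard bounds $J_{0}(z)\ll(1+z)^{-1/2}$ and $Y_{0}(z)\ll z^{-1/2}$. In the range $\theta>C/l$ these give $w,\widetilde{w}\ll\nu^{-1/2}\theta^{1/2}$, and after dividing out $(\sin\theta)^{1/2}$ to return to $P_{l}$ one obtains $\delta(\theta)\ll\theta^{1/2}l^{-3/2}$. In the inner range $0<\theta<C/l$ the argument $\nu\theta$ is bounded, so I would instead Taylor expand both $P_{l}(\cos\theta)$ and $(\theta/\sin\theta)^{1/2}J_{0}(\nu\theta)$ about $\theta=0$: the $O(l^{2})$ parts of their quadratic coefficients are $-\tfrac14 l(l+1)$ and $-\tfrac14\nu^{2}$, which cancel up to $O(1)$ thanks to $\nu^{2}=l(l+1)+\tfrac14$, leaving $\delta(\theta)\ll\theta^{2}$.

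The main obstacle I anticipate is making the estimate uniform across the transition $\theta\sim1/l$ and near the singular endpoint $\theta=0$: the kernel inherits the logarithmic singularity of $Y_{0}$ at the origin, which must be absorbed by the $\theta^{1/2}$ weight, and the delicate cancellation in the inner regime relies on the exact relation $\nu^{2}=l(l+1)+\tfrac14$ rather than a crude $\nu\approx l$. Careful bookkeeping of the Wronskian factors and of the $\nu^{1/2}$ normalizations is precisely what converts the a priori bound on $R$ into the two-regime estimate \eqref{eq:delta bound}.
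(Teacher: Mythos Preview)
The paper does not prove this lemma at all: it is stated as a classical fact with a reference to Szeg\H{o} \cite{szego}, and then used as a black box in the moment computations that follow. So there is no ``paper's own proof'' to compare against.

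Your sketch is the standard Liouville--Green derivation and is essentially correct. The reduction of Legendre's equation to the normal form $u''+(\nu^{2}+\tfrac14\csc^{2}\theta)u=0$ via $u=(\sin\theta)^{1/2}P_{l}$, the comparison with $w''+(\nu^{2}+\tfrac14\theta^{-2})w=0$ solved by $w=\theta^{1/2}J_{0}(\nu\theta)$, and the Volterra iteration driven by the bounded perturbation $\chi(\theta)=\tfrac14(\csc^{2}\theta-\theta^{-2})$ are exactly the ingredients Szeg\H{o} uses. One small slip: in the outer range you write $w,\widetilde{w}\ll\nu^{-1/2}\theta^{1/2}$, but in fact $w=\theta^{1/2}J_{0}(\nu\theta)\ll\theta^{1/2}(\nu\theta)^{-1/2}=\nu^{-1/2}$, and similarly for $\widetilde{w}$; the factor $\theta^{1/2}$ in the final bound for $\delta$ comes from the length of the integration interval and the division by $(\sin\theta)^{1/2}$, not from the size of $w$. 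With that correction the bookkeeping goes through: the Volterra remainder satisfies $|R(\theta)|\ll\theta\,\nu^{-3/2}$, and dividing by $(\sin\theta)^{1/2}\asymp\theta^{1/2}$ gives $\delta\ll\theta^{1/2}l^{-3/2}$ as required. The inner-range Taylor argument is fine, and you correctly identify the logarithmic singularity of $Y_{0}$ and the role of the exact identity $\nu^{2}=l(l+1)+\tfrac14$ as the points requiring care.
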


Note that we will use Lemma \ref{lem:Hilb} only for the range $\theta \in
\lbrack 0,\frac{\pi }{2}]$, so that we may forget about the $\epsilon $
altogether. We will also recall that (see again \cite{szego})
\begin{equation}
|J_{0}(\psi )|=O\left( \frac{1}{\sqrt{\psi }}\right) .
\label{eq:J(psi)<<1/sqrt(psi)}
\end{equation}

\begin{lemma}
\label{lem:int Pl^k =ck/l^2} Let $j \ge 5$ or $j=3$. Then
\begin{equation}  \label{eq:int Pl^k =ck/l^2}
\int\limits_{0}^{\pi/2} P_{l} (\cos\theta)^j\sin{\theta} d\theta = c_{j}%
\frac{1}{l^2}(1+o_{j}(1)),
\end{equation}
where the constants $c_{j}$ are given by
\begin{equation}  \label{eq:ck def}
c_{j} = \int\limits_{0}^{\infty} \psi J_{0}(\psi)^j d\psi,
\end{equation}
the RHS of \eqref{eq:ck def} being absolutely convergent for $j\ge 5$ and
conditionally convergent for $j=3$. Moreover, for every $j$ as above, the
constants $c_{j} \ge 0$ are nonnegative, and $c_{3}>0$ is positive, given
explicitly by
\begin{equation}  \label{eq:c3 = 2/pi*sqrt(3)}
c_{3} = \frac{2}{\pi \sqrt{3}}.
\end{equation}
\end{lemma}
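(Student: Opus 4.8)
The plan is to extract the asymptotics by feeding Hilb's asymptotics (Lemma~\ref{lem:Hilb}) into the integral after rescaling, so that $P_l(\cos\theta)$ is replaced by its Bessel scaling limit $J_0$. Writing $L:=l+\tfrac12$ and substituting $\psi=L\theta$, I would rewrite
\begin{equation*}
\int_0^{\pi/2}P_l(\cos\theta)^j\sin\theta\,d\theta=\frac1{L^2}\int_0^{L\pi/2}\psi\,\big[h_L(\psi)J_0(\psi)+\delta(\psi/L)\big]^j\,s_L(\psi)\,d\psi ,
\end{equation*}
where $h_L(\psi)=\big(\tfrac{\psi/L}{\sin(\psi/L)}\big)^{1/2}$ is the Hilb prefactor in \eqref{eq:Hilb}, $s_L(\psi)=\tfrac{\sin(\psi/L)}{\psi/L}=h_L(\psi)^{-2}$ comes from rewriting $\sin\theta$, and $\delta$ is the error \eqref{eq:delta bound}. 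Both $h_L$ and $s_L$ are bounded on the range and tend to $1$ pointwise, and using $h_L^{\,j}s_L=h_L^{\,j-2}$ the leading integrand simplifies to $\psi\,h_L(\psi)^{j-2}J_0(\psi)^j$. The heuristic is then that the inner integral converges to $c_j=\int_0^\infty\psi J_0(\psi)^j\,d\psi$, which gives exactly the claimed $c_j/L^2\sim c_j/l^2$.

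Making this rigorous splits along the two cases. For $j\ge5$ the defining integral for $c_j$ is absolutely convergent, since $|J_0(\psi)|\ll\psi^{-1/2}$ by \eqref{eq:J(psi)<<1/sqrt(psi)} gives $|\psi J_0(\psi)^j|\ll\psi^{1-j/2}$ with $1-j/2\le-3/2$. Here I would invoke dominated convergence, dominating the integrand by a constant multiple of $\min(\psi,\psi^{1-j/2})$ after checking that the $\delta$-contributions are uniformly negligible (for $\psi\ge C$ one has $\delta(\psi/L)\ll\psi^{1/2}l^{-2}$, and for $\psi<C$, $\delta\ll\psi^2/L^2$).

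The main obstacle is $j=3$, where $|\psi J_0(\psi)^3|\ll\psi^{-1/2}$ fails to be absolutely integrable and the whole statement hinges on cancellation. I would fix a large cutoff $T$ and treat $[0,T]$ by bounded convergence, yielding $\int_0^T\psi J_0^3\,d\psi$, which tends to $c_3$ as $T\to\infty$ by conditional convergence; on $[T,L\pi/2]$ I would insert the Bessel expansion $J_0(\psi)=\sqrt{2/(\pi\psi)}\cos(\psi-\tfrac\pi4)+O(\psi^{-3/2})$ and expand the cube using $\cos^3=\tfrac34\cos+\tfrac14\cos3(\cdot)$. The purely oscillatory part reduces to $\int_T^{L\pi/2}\psi^{-1/2}h_L(\psi)\cos(\psi+\mathrm{const})\,d\psi$, which integration by parts bounds by $O(T^{-1/2}+L^{-1/2})$: the boundary term is $O(T^{-1/2})$, the $-\tfrac12\psi^{-3/2}h_L$ piece contributes $O(T^{-1/2})$, and since $h_L'(\psi)=O(1/L)$ the $h_L'$ piece contributes $O(L^{-1/2})$. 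Letting $L\to\infty$ and then $T\to\infty$ kills this tail, while the $O(\psi^{-3/2})$ remainder is absolutely small; the same oscillation argument simultaneously establishes the conditional convergence of $c_3$.

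Finally I would settle the two arithmetic claims about $c_j$. For the explicit value, $c_3=\int_0^\infty J_0(\psi)^3\psi\,d\psi$ is the classical three–Bessel integral $\int_0^\infty J_0(a\psi)J_0(b\psi)J_0(c\psi)\psi\,d\psi=\tfrac1{2\pi\Delta}$, with $\Delta$ the area of the triangle of sides $a,b,c$ (and $0$ when no such triangle exists); taking $a=b=c=1$ gives an equilateral triangle of area $\Delta=\sqrt3/4$, whence $c_3=\tfrac{2}{\pi\sqrt3}$. For nonnegativity I would use the probabilistic meaning of $J_0$: if $U_1,\dots,U_j$ are i.i.d.\ uniform on the unit circle of $\R^2$, then $J_0(|\xi|)^j$ is the characteristic function of $V=U_1+\cdots+U_j$, so for $j\ge5$, where this function lies in $L^1(\R^2)$, Fourier inversion gives $c_j=2\pi\,p_V(0)\ge0$ with $p_V$ the (continuous) density of $V$; combined with the explicit positivity $c_3>0$, this yields $c_j\ge0$ for every admissible $j$.
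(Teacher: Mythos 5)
Your proposal is correct, and its analytic core coincides with the paper's own proof: Hilb's asymptotics, the substitution $\psi=(l+\tfrac12)\theta$ which merges the Hilb prefactor and $\sin\theta$ into the single factor $\left(\tfrac{\psi/L}{\sin(\psi/L)}\right)^{j/2-1}$, domination/absolute convergence for $j\ge 5$, and for $j=3$ a cutoff separating a bulk (handled by convergence of the truncated integral, with the factor replaced by $1$) from an oscillatory tail handled by integration by parts against the antiderivative of $\cos^3$; the only cosmetic difference is that the paper couples the cutoff to $l$ (taking $K=\sqrt{l}$, with error $O(K^{-1/2}+K^{5/2}/l^2)$) instead of your iterated limits $L\to\infty$ then $T\to\infty$, and your accounting of the $h_L'$ term is if anything slightly more explicit than the paper's. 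Where you genuinely diverge is the arithmetic of the constants. The paper obtains both $c_j\ge 0$ and $c_3=\tfrac{2}{\pi\sqrt3}$ from its companion work \cite{MaWi} via Lemma \ref{MaWiProp}: the Legendre moments $\int_0^1 P_l(t)^j\,dt$ equal $\tfrac{1}{2l+1}$ times sums of squares of Clebsch--Gordan coefficients, hence are nonnegative for every finite $l$, and the normalized $j=3$ limit was computed there to be $\tfrac{2}{\pi\sqrt3}$. You instead evaluate $c_3$ from the classical three-Bessel (Sonine/Weber--Schafheitlin-type) formula $\int_0^\infty J_0(a\psi)J_0(b\psi)J_0(c\psi)\psi\,d\psi=\tfrac{1}{2\pi\Delta}$ with $a=b=c=1$ and $\Delta=\tfrac{\sqrt3}{4}$, and you get nonnegativity for $j\ge 5$ by recognizing $J_0(|\xi|)^j$ as the characteristic function of a sum of $j$ i.i.d.\ uniform points on the unit circle of $\R^2$, which lies in $L^1(\R^2)$ precisely when $j\ge5$, so that Fourier inversion gives $c_j=2\pi p_V(0)\ge 0$. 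Both routes are sound. Yours is self-contained, needing no representation theory and no input from \cite{MaWi}; the paper's route buys more: nonnegativity of every moment at finite $l$ (not merely of the limits), and the corollary identifying limits of Clebsch--Gordan sums with the Bessel moments $\int_0^\infty\psi J_0(\psi)^j\,d\psi$, which is of independent interest.
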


\begin{proof}
By the Hilb's asymptotics, we have
\begin{equation}
\int\limits_{0}^{\pi /2}P_{l}(\cos \theta )^{j}\sin {\theta }d\theta
=\int\limits_{0}^{\pi /2}\left( \left( \frac{\theta }{\sin (\theta )}\right)
^{1/2}J_{0}((l+1/2)\theta )+\delta (\theta )\right) ^{j}\sin {\theta }%
d\theta .  \label{eq:mom k = J + delta}
\end{equation}%
The contribution of the error term to \eqref{eq:mom k = J + delta} is
(exploiting $\frac{\theta }{\sin \theta }$ being bounded)
\begin{equation}
\ll \int\limits_{0}^{\pi /2}|J_{0}((l+1/2)\theta )|^{j-1}\delta (\theta
)\theta d\theta =\int\limits_{0}^{1/l}+\int\limits_{1/l}^{\pi /2}.
\label{eq:int delta 0->1/l->pi/2}
\end{equation}%
Now
\begin{equation*}
\int\limits_{0}^{1/l}|J_{0}((l+1/2)\theta )|^{j-1}\delta (\theta )\theta
d\theta \ll \int\limits_{0}^{1/l}\theta ^{3}d\theta \ll \frac{1}{l^{4}},
\end{equation*}%
and using \eqref{eq:J(psi)<<1/sqrt(psi)}, we may bound the second integral
in \eqref{eq:int delta 0->1/l->pi/2} as
\begin{equation*}
\int\limits_{1/l}^{\pi /2}|J_{0}((l+1/2)\theta )|^{j-1}\delta (\theta
)\theta d\theta \ll \frac{1}{l^{(j+2)/2}}\int\limits_{1/l}^{\pi /2}\theta
^{3/2}d\theta \ll \frac{1}{l^{(j+2)/2}}.
\end{equation*}%
Plugging the last couple of estimates into \eqref{eq:int delta 0->1/l->pi/2}%
, and finally into \eqref{eq:mom k = J + delta}, we obtain for $j\geq 3$
\begin{equation}
\int\limits_{0}^{\pi /2}P_{l}(\cos \theta )^{j}\sin {\theta }d\theta
=\int\limits_{0}^{\pi /2}\left( \frac{\theta }{\sin (\theta )}\right)
^{j/2-1}J_{0}((l+1/2)\theta )^{j}\theta d\theta +O\left( \frac{1}{l^{5/2}}%
\right) .  \label{eq:mom k <-> J}
\end{equation}

Therefore we are to evaluate
\begin{equation*}
\int\limits_{0}^{\pi /2}\left( \frac{\theta }{\sin (\theta )}\right)
^{j/2-1}J_{0}((l+1/2)\theta )^{j}\theta d\theta =\frac{1}{L^{2}}%
\int\limits_{0}^{L\pi /2}\left( \frac{\psi /L}{\sin (\psi /L)}\right)
^{j/2-1}J_{0}(\psi )^{j}\psi d\psi ,
\end{equation*}%
where we denote $L:=l+\frac{1}{2}$ for brevity. The statement of the present
lemma is then equivalent to
\begin{equation}
\int\limits_{0}^{L\pi /2}\left( \frac{\psi /L}{\sin (\psi /L)}\right)
^{j/2-1}J_{0}(\psi )^{j}\psi d\psi \rightarrow c_{j},
\label{eq:int sinx/x 1}
\end{equation}%
where $c_{j}$ is defined by \eqref{eq:ck def}.

The main idea is that the leading contribution is provided from the
intermediate range $1\ll \psi \ll \epsilon L$ for any $\epsilon >0$; here we
may replace the factor
\begin{equation*}
\left( \frac{\psi /L}{\sin (\psi /L)}\right) ^{j/2-1}
\end{equation*}%
with $1$. To make this argument precise, we write for $\psi \in \lbrack 0,%
\frac{\pi }{2}\cdot L]$
\begin{equation*}
\frac{\psi /L}{\sin (\psi /L)}=1+O\left( \frac{\psi ^{2}}{L^{2}}\right) ,
\end{equation*}%
so that also
\begin{equation}
\left( \frac{\psi /L}{\sin (\psi /L)}\right) ^{j/2-1}=1+O_{j}\left( \frac{%
\psi ^{2}}{L^{2}}\right) .  \label{eq:sinx/x=1+O(x^2)}
\end{equation}%
Therefore the integral in \eqref{eq:int sinx/x 1} is
\begin{equation}
\int\limits_{0}^{L\pi /2}\left( \frac{\psi /L}{\sin (\psi /L)}\right)
^{j/2-1}J_{0}(\psi )^{j}\psi d\psi =\int\limits_{0}^{L\pi /2}J_{0}(\psi
)^{j}\psi d\psi +O\left( \frac{1}{L^{2}}\int\limits_{0}^{L\pi /2}\psi
^{3}|J_{0}(\psi )|^{j}d\psi \right) .  \label{eq:rep sinx/x -> 1}
\end{equation}

Note that as $l\rightarrow \infty $ (equivalently $L\rightarrow \infty $),
the main term on the RHS of converges to $c_{j};$ indeed
\begin{equation}
\int\limits_{0}^{L\pi /2}J_{0}(\psi )^{j}\psi d\psi \rightarrow c_{j},
\label{eq:int J^kpsi -> ck}
\end{equation}%
so that it remains to bound the error term. Now
\begin{equation}
\int\limits_{0}^{L\pi /2}\psi ^{3}|J_{0}(\psi )^{j}|d\psi
=\int_{0}^{1}+\int_{1}^{L\pi /2}=O(1)+\int_{1}^{L\pi /2},
\label{eq:int 0->L psi^3J^k 0->1->L}
\end{equation}%
and we use \eqref{eq:J(psi)<<1/sqrt(psi)} to bound the latter as
\begin{equation*}
\int_{1}^{L\pi /2}\psi ^{3}|J_{0}(\psi )^{j}|d\psi \ll \int\limits_{1}^{L\pi
/2}\psi ^{3-j/2}d\psi =O(1+l^{4-j/2}),
\end{equation*}%
so that upon plugging the latter into \eqref{eq:int 0->L psi^3J^k 0->1->L}
yields
\begin{equation*}
\int\limits_{0}^{L\pi /2}\psi ^{3}|J_{0}(\psi )^{j}|d\psi =O(1+l^{4-j/2});
\end{equation*}%
plugging the latter into \eqref{eq:rep sinx/x -> 1} yields
\begin{equation*}
\int\limits_{0}^{L\pi /2}\left( \frac{\psi /L}{\sin (\psi /L)}\right)
^{j/2-1}J_{0}(\psi )^{j}\psi d\psi =\int\limits_{0}^{L\pi /2}J_{0}(\psi
)^{j}\psi d\psi +O(l^{-2}+l^{2-j/2}),
\end{equation*}%
so that \eqref{eq:int J^kpsi -> ck} implies \eqref{eq:int sinx/x 1} for $%
j\geq 5$, which was equivalent to the statement of the present lemma in this
case.

It then remains to prove the result for $j=3$, for which case we have to
work a little harder due to the conditional convergence of the integral %
\eqref{eq:ck def}; to treat this technicality we will have to exploit the
oscillatory behaviour of the Bessel function, and not only its decay %
\eqref{eq:J(psi)<<1/sqrt(psi)}. It is well-known that
\begin{equation*}
J_{0}(\psi )=\sqrt{\frac{2}{\pi }}\frac{\cos (\psi -\pi /4)}{\sqrt{\psi }}%
+O\left( \frac{1}{\psi ^{3/2}}\right) ,
\end{equation*}%
so that
\begin{equation}
\left( \frac{2}{\pi }\right) ^{3/2}\frac{\cos (\psi -\pi /4)^{3}}{\psi ^{3/2}%
}+O\left( \frac{1}{\psi ^{5/2}}\right) ,  \label{eq:J^3 asymp}
\end{equation}%
and hence the integral on the RHS \eqref{eq:ck def} is indeed convergent, by
integration by parts.

Now we choose a (large) parameter $K\gg 1$, divide the integration range
into $[0,K]$ and $[K,L\frac{\pi }{2}]$; the main contribution comes from the
first term, whence we need to prove that the latter vanishes. Indeed, we use %
\eqref{eq:J^3 asymp} to bound
\begin{equation}
\int\limits_{K}^{L\pi /2}\left( \frac{\psi /L}{\sin (\psi /L)}\right)
^{1/2}J_{0}(\psi )^{3}\psi d\psi \ll \frac{1}{\sqrt{K}},
\label{eq:intJ^3 C->L =O(1/sqrt{C})}
\end{equation}%
where we use integration by parts with the bounded function
\begin{equation*}
I(T)=\int\limits_{0}^{T}\cos (t)^{3}dt
\end{equation*}%
to bound the contribution leading term in \eqref{eq:J^3 asymp}; it is easy
to bound the contribution of the error term in \eqref{eq:J^3 asymp} using
the crude estimate \eqref{eq:J(psi)<<1/sqrt(psi)}.

On $[0,K]$ we use \eqref{eq:sinx/x=1+O(x^2)} to write
\begin{equation}
\int\limits_{0}^{K}\left( \frac{\psi /L}{\sin (\psi /L)}\right)
^{1/2}J_{0}(\psi )^{3}\psi d\psi =\int\limits_{0}^{K}J_{0}(\psi )^{3}\psi
d\psi +O\left( \frac{K^{5/2}}{l^{2}}\right) ,  \label{eq:intJ^3 0->C}
\end{equation}%
the former clearly being convergent to $c_{3}$. Combining
\eqref{eq:intJ^3
C->L =O(1/sqrt{C})} with \eqref{eq:intJ^3 0->C} we obtain
\begin{equation*}
\int\limits_{K}^{L\pi /2}\left( \frac{\psi /L}{\sin (\psi /L)}\right)
^{1/2}J_{0}(\psi )^{3}\psi d\psi =\int\limits_{0}^{K}J_{0}(\psi )^{3}\psi
d\psi +O\left( \frac{1}{\sqrt{K}}+\frac{K^{5/2}}{l^{2}}\right) ;
\end{equation*}%
this implies \eqref{eq:int sinx/x 1} for $j=3$ (which is equivalent to the
statement of the present lemma for $j=3$) upon choosing the parameter $K$
growing to infinity sufficiently slowly (e.g. $K=\sqrt{l}$). Since $j=3$ was
the only case that was not covered earlier, this concludes proof of the
moments part of Lemma \ref{lem:int Pl^k =ck/l^2}.

It then remains to prove the nonnegativity statement of $c_{j}$, and the
explicit expression \eqref{eq:c3 = 2/pi*sqrt(3)} for $c_{3}$. In fact, both
of those statements follow from the computation we performed in our previous
paper ~\cite{MaWi}, p. 18. We report the relevant results in Lemma \ref%
{MaWiProp} below.
\end{proof}

We recall an alternative characterization for moments of Legendre
polynomials from ~\cite{MaWi}. For given positive integers $l_{1},l_{2},l_{3}
$ we introduce the so-called Clebsch-Gordan coefficients $\left\{
C_{l_{1}0l_{2}0}^{l_{3}0}\right\} ,$ which are different from zero if and
only if $l_{1},l_{2},l_{3}$ are such that $l_{1}+l_{2}+l_{3}$ is even and $%
l_{i}+l_{j}\leq l_{k}$ for all permutations $i,j,k=1,2,3.$ The
Clebsch-Gordan coefficients are well-known in group representation theory
(they intertwine alternative representations for $SO(3)$) and in the quantum
theory of angular momentum; we do not provide more details here, but we
refer instead to \cite{MaWi} or to standard references such as \cite{VMK,VIK}
(see also \cite{MaPeCUP}). The results provided in \cite{MaWi} are as
follows:

\begin{lemma}[~\protect\cite{MaWi}, Lemma A.1 and above]
\label{MaWiProp}For all even $l$, we have
\begin{equation*}
\int\limits_{0}^{1}P_{l}(t) ^{3}dt=\frac{1}{2l+1}\left\{
C_{l0l0}^{l0}\right\} ^{2}\text{ ,}
\end{equation*}%
and
\begin{equation*}
\lim_{l\rightarrow \infty }\frac{1}{2l+1}\left\{ C_{l0l0}^{l0}\right\} ^{2}=%
\frac{2}{\pi \sqrt{3}}.
\end{equation*}%
Also, for $j\geq 5$
\begin{equation*}
\int\limits_{0}^{1}P_{l}(t)^{j}dt=\frac{1}{2l+1}\sum_{L_{1}...L_{j-3}}\left%
\{ C_{l0l0}^{L_{1}0}C_{L_{1}0l0}^{L_{2}0}...C_{L_{j-3}0l0}^{l0}\right\}
^{2}>0\text{ .}
\end{equation*}
\end{lemma}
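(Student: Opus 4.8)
The plan is to pass from the Legendre polynomials to the zonal spherical harmonic and then reduce every moment to an integral of a product of spherical harmonics, which is governed by Clebsch--Gordan coefficients. With the standard $L^{2}$-orthonormalisation $\int_{\sphere}Y_{lm}\overline{Y_{l'm'}}\,dx=\delta_{ll'}\delta_{mm'}$ one has $Y_{l0}=\sqrt{\tfrac{2l+1}{4\pi}}\,P_{l}(\cos\theta)$, hence $P_{l}(\cos\theta)=\sqrt{\tfrac{4\pi}{2l+1}}\,Y_{l0}$. Since $l$ is even, $P_{l}(\cos\theta)^{j}$ is symmetric under $\theta\mapsto\pi-\theta$, so that
\begin{equation*}
\int_{0}^{1}P_{l}(t)^{j}\,dt=\frac{1}{4\pi}\int_{\sphere}P_{l}(\cos\theta)^{j}\,dx=\frac{1}{4\pi}\Big(\frac{4\pi}{2l+1}\Big)^{j/2}\int_{\sphere}Y_{l0}^{\,j}\,dx,
\end{equation*}
reducing everything to the evaluation of $\int_{\sphere}Y_{l0}^{\,j}\,dx$.

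For $j=3$ I would invoke the Gaunt integral, which expresses $\int_{\sphere}Y_{l_{1}m_{1}}Y_{l_{2}m_{2}}Y_{l_{3}m_{3}}\,dx$ through the product $C_{l_{1}0l_{2}0}^{l_{3}0}\,C_{l_{1}m_{1}l_{2}m_{2}}^{l_{3}m_{3}}$ together with the normalising factor $\sqrt{(2l_{1}+1)(2l_{2}+1)/4\pi(2l_{3}+1)}$. Specialising to $l_{1}=l_{2}=l_{3}=l$ and $m_{1}=m_{2}=m_{3}=0$ collapses both Clebsch--Gordan factors to $C_{l0l0}^{l0}$, and after the factors of $4\pi$ and $2l+1$ cancel against the prefactor above one is left with exactly $\int_{0}^{1}P_{l}(t)^{3}\,dt=\tfrac{1}{2l+1}\{C_{l0l0}^{l0}\}^{2}$. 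The selection rules ($l+l+l$ even and $l\le 2l$) guarantee $C_{l0l0}^{l0}\neq 0$ for even $l$, so this single square is strictly positive.

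For $j\ge 5$ the idea is to iterate the product expansion of two zonal harmonics. Because all magnetic quantum numbers vanish, the two Clebsch--Gordan factors in each fusion coincide, so $Y_{l0}^{2}=\sum_{L_{1}}\alpha_{L_{1}}\,(C_{l0l0}^{L_{1}0})^{2}\,Y_{L_{1}0}$, and one folds in the remaining factors one at a time, introducing intermediate momenta $L_{1},\dots,L_{j-3}$. At the final step the orthonormality $\int_{\sphere}Y_{L0}Y_{l0}\,dx=\delta_{Ll}$ forces the last intermediate momentum back to $l$, while the $2L_{i}+1$ normalisers telescope against the $\big(\tfrac{4\pi}{2l+1}\big)^{j/2}/4\pi$ prefactor, leaving precisely $\tfrac{1}{2l+1}$ and the advertised chain of squares $\{C_{l0l0}^{L_{1}0}C_{L_{1}0l0}^{L_{2}0}\cdots C_{L_{j-3}0l0}^{l0}\}^{2}$. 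Nonnegativity of $\int_{0}^{1}P_{l}(t)^{j}\,dt$, and hence of the rescaled limit constants $c_{j}$ of Lemma~\ref{lem:int Pl^k =ck/l^2}, is immediate from this manifest sum-of-squares form, with the $j=3$ case being the single-square base.

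It remains to pin down the explicit constant for $j=3$. Here I would apply Stirling's formula to the Racah closed form of the Wigner symbol, using $\left(\begin{smallmatrix}l&l&l\\0&0&0\end{smallmatrix}\right)^{2}=\tfrac{1}{2l+1}\{C_{l0l0}^{l0}\}^{2}$ and
\begin{equation*}
\left(\begin{matrix}l&l&l\\0&0&0\end{matrix}\right)^{2}=\frac{\big((3l/2)!\big)^{2}\,(l!)^{3}}{\big((l/2)!\big)^{6}\,(3l+1)!};
\end{equation*}
the exponential factors cancel and the polynomial prefactors leave $\tfrac{1}{2l+1}\{C_{l0l0}^{l0}\}^{2}\sim\tfrac{2}{\pi\sqrt{3}}\,l^{-2}$, identifying the relevant constant as $\tfrac{2}{\pi\sqrt{3}}$ and confirming $c_{3}>0$. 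As an independent check, the same value arises from the classical Bessel triangle integral $\int_{0}^{\infty}J_{0}(\psi)^{3}\psi\,d\psi=\tfrac{1}{2\pi\Delta}$ with $\Delta=\tfrac{\sqrt{3}}{4}$ the area of the unit equilateral triangle, matched against the scaling part of Lemma~\ref{lem:int Pl^k =ck/l^2}. The main obstacle I anticipate is the bookkeeping in the iterated expansion for $j\ge 5$: one must track the normalising constants through each fusion and verify that the final orthogonality relation collapses the multiple sum into exactly the stated chain of squares, with all $2L_{i}+1$ and $4\pi$ factors cancelling to leave the clean prefactor $\tfrac{1}{2l+1}$.
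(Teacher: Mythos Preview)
The paper does not prove this lemma at all: it is quoted verbatim from \cite{MaWi} (``Lemma A.1 and above''), and in the proof of Lemma~\ref{lem:int Pl^k =ck/l^2} the authors simply write that ``both of those statements follow from the computation we performed in our previous paper~\cite{MaWi}, p.~18''. So there is nothing in the present paper to compare your argument against.

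That said, your outline is the standard route and is essentially correct. A small simplification: instead of passing through $Y_{l0}$ and then worrying about the telescoping of $(2L_i+1)$ and $4\pi$ factors, it is cleaner to stay with Legendre polynomials and use the product expansion
\[
P_{l_1}(t)\,P_{l_2}(t)=\sum_{L}\bigl(C_{l_1 0\, l_2 0}^{\,L0}\bigr)^{2}P_{L}(t),
\]
which carries no extra normalisers; iterating $j-2$ times and then integrating against $P_l$ using $\int_{-1}^{1}P_L P_l\,dt=\tfrac{2}{2l+1}\delta_{Ll}$ immediately forces $L_{j-2}=l$ and produces exactly $\tfrac{1}{2l+1}\sum_{L_1,\dots,L_{j-3}}\{\cdots\}^{2}$ after halving for even $l$. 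This removes the ``main obstacle'' you flagged entirely.

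Two minor points. First, the limit as printed in the paper is a typo: what is actually used (and what your Stirling computation correctly gives) is $\tfrac{1}{2l+1}\{C_{l0l0}^{l0}\}^{2}\sim \tfrac{2}{\pi\sqrt{3}}\,l^{-2}$, not a nonzero limit; you spotted this and wrote the right asymptotic. Second, for $j\ge 5$ the statement claims strict positivity, while your sum-of-squares argument only gives $\ge 0$; to close this, exhibit one nonzero chain (for instance the one with $L_1=0$, $L_2=l$, $L_3=0,\dots$, since $C_{l0l0}^{00}=(-1)^{l}/\sqrt{2l+1}\ne 0$ and $C_{00\,l0}^{\,l0}=1$). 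For the application in this paper only $c_j\ge 0$ for $j\ge 5$ and $c_3>0$ are actually needed, both of which you do establish.
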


\begin{verbatim}

\end{verbatim}

\begin{remark}
The previous discussion yields the following interesting corollary: as $%
l\rightarrow \infty $%
\begin{equation*}
\lim_{l\rightarrow \infty }\frac{1}{2l+1}\left\{ C_{l0l0}^{l0}\right\} ^{2}
=\int\limits_{0}^{\infty }\psi J_{0}(\psi )^{3}d\psi \text{,}
\end{equation*}
and for $j\geq5$
\begin{equation*}
\lim_{l\rightarrow \infty }\left[ \frac{1}{2l+1}\sum_{L_{1}...L_{j-3}}\left%
\{ C_{l0l0}^{L_{1}0}C_{L_{1}0l0}^{L_{2}0}...C_{L_{j-3}0l0}^{l0}\right\} ^{2}%
\right] =\int\limits_{0}^{\infty }\psi J_{0}(\psi )^{j}d\psi \text{.}
\end{equation*}%
\bigskip
\end{remark}

The following lemma establishes the lower bound for the constant $C$ in
Theorem \ref{thm:var=1/l^2} claimed in Section \ref{sec:main res}. Its proof
is straightforward.

\begin{lemma}
\label{lem:C>2/pi^2 /sqrt(27)} For $C$ as in \eqref{eq:var=1/l^2}, we have
\begin{equation*}
C>\frac{32}{\sqrt{27}}.
\end{equation*}
\end{lemma}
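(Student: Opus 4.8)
The plan is to read off the lower bound directly from the series representation of the constant that was established within the proof of Proposition \ref{prop:Il sim C1/l^2}, combined with the explicit value and nonnegativity recorded in Lemma \ref{lem:int Pl^k =ck/l^2}. Recall that $C = 32\pi C_1$, and that in the course of proving Proposition \ref{prop:Il sim C1/l^2} it was shown that
\[
C_1 = C_2 = \sum_{k=1}^{\infty} a_k c_{2k+1},
\]
where the $a_k$ are the Taylor coefficients \eqref{eq:ak arcsine def} of $\arcsin(t)-t$ and $c_j = \int_0^\infty \psi J_0(\psi)^j\, d\psi$. The first thing I would observe is that every summand is nonnegative: the $a_k>0$ are strictly positive by \eqref{eq:ak arcsine def}, while the moments satisfy $c_{2k+1}\ge 0$ by the last part of Lemma \ref{lem:int Pl^k =ck/l^2}. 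Consequently, discarding all but the $k=1$ term gives the lower bound $C_1 \ge a_1 c_3$.

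Next I would evaluate the two surviving factors explicitly. From \eqref{eq:ak arcsine def} with $k=1$ one has $a_1 = \tfrac{2!}{4\cdot(1!)^2\cdot 3} = \tfrac16$, and from \eqref{eq:c3 = 2/pi*sqrt(3)} one has $c_3 = \tfrac{2}{\pi\sqrt{3}}$. Therefore
\[
C = 32\pi C_1 \ge 32\pi\, a_1 c_3 = 32\pi\cdot\frac{1}{6}\cdot\frac{2}{\pi\sqrt{3}} = \frac{32}{3\sqrt{3}} = \frac{32}{\sqrt{27}},
\]
which already yields the desired bound up to strictness.

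Finally, to promote this to the strict inequality claimed, I would point out that the series for $C_1$ contains at least one further strictly positive contribution, for instance the term $a_2 c_5$: here $a_2>0$ by \eqref{eq:ak arcsine def}, and $c_5>0$ since by Lemma \ref{MaWiProp} (and the Remark identifying the limit of the Clebsch--Gordan sum with $c_5$) the moment $c_5$ is the limit of a positive quantity. Hence $C_1 > a_1 c_3$, and the inequality above becomes strict. This last step is the only place requiring any thought; every other step is a direct substitution, which is why the argument is short.
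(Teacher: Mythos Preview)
Your argument is essentially the same as the paper's: both use $C=32\pi C_{1}$, the series $C_{1}=\sum_{k\ge 1}a_{k}c_{2k+1}$, the nonnegativity of all terms, and then keep only $k=1$ with $a_{1}=\tfrac16$ and $c_{3}=\tfrac{2}{\pi\sqrt{3}}$. The paper's own proof in fact stops at the non-strict inequality $C\ge 32\pi\,a_{1}c_{3}=\tfrac{32}{\sqrt{27}}$ and does not spell out why the inequality is strict, so your attempt to address strictness goes a step beyond what the paper writes.

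That said, your justification of strictness has a small logical slip. You argue that $c_{5}>0$ because, via Lemma \ref{MaWiProp} and the subsequent Remark, $c_{5}$ is the limit of the strictly positive quantities $\int_{0}^{1}P_{l}(t)^{5}\,dt$. But a limit of strictly positive numbers is only guaranteed to be $\ge 0$, not $>0$, so this step does not close the gap by itself. If you want a clean fix, you can argue directly from the absolutely convergent integral: since $J_{0}(\psi)^{5}$ decays like $\psi^{-5/2}$ and $J_{0}$ is positive on $[0,j_{0,1})$ with $j_{0,1}\approx 2.405$, a crude numerical estimate of $\int_{0}^{j_{0,1}}\psi J_{0}(\psi)^{5}\,d\psi$ already dominates the absolute value of the tail $\int_{j_{0,1}}^{\infty}\psi|J_{0}(\psi)|^{5}\,d\psi$ (the latter being bounded by $(2/\pi)^{5/2}\int_{j_{0,1}}^{\infty}\psi^{-3/2}\,d\psi$), giving $c_{5}>0$. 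With that in place your strictness argument is complete.
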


\begin{proof}
Recall that $C=\frac{2}{\pi }C_{1}$, where $C_{1}=C_{2}$ is given by %
\eqref{eq:C2 def}. Since all the $a_{j}$ and $c_{k}$ are nonnegative, Lemma %
\ref{lem:C>2/pi^2 /sqrt(27)} follows from bounding the first term in the
series
\begin{equation*}
C\geq 32\pi \times a_{1}c_{3}=\frac{32}{\sqrt{27}},
\end{equation*}%
where we used $a_{1}=1/6$ and \eqref{eq:c3 = 2/pi*sqrt(3)}.
\end{proof}

\begin{remark}
{\em
As a final remark, we note that the moments of $P_{l}(\cos {\theta })$ are
themselves of some physical interest. In particular, while analyzing the
relationship between asymptotic Gaussianity and ergodicity of isotropic
spherical random fields, \cite{MaPeJMP} considered the random field
\begin{equation*}
T(x):=\sum_{l=1}^{\infty }T_{l}(x)=\sum_{l=1}^{\infty }c_{l}Y_{l0}(g\cdot x),
\end{equation*}%
where $Y_{l0}(\theta ,\phi )=\sqrt{\frac{2l+1}{4\pi }}P_{l}(\cos \theta )$
denotes standard spherical harmonics (for $m=0)$ and $g\in SO(3)$ is a
uniformly distributed random rotation in $\mathbb{R}^{3}.$ It is readily
seen that the resulting field is isotropic, and each of its Fourier
components at frequency $l$ is marginally distributed as (for any $x\in
\mathcal{S}^{2})$%
\begin{equation*}
Y_{l}=c_{l}\sqrt{\frac{2l+1}{4\pi }}P_{l}(t)\text{, }t\sim U[0,1].
\end{equation*}%
Up to normalization constants, then, the moments of the marginal law are
exactly those of the Legendre polynomials we established earlier in the
paper. More precisely, if we focus (as in \cite{MaPeJMP}) on $\widetilde{T}%
_{l}:=T_{l}(x)/\sqrt{Var(T_{l})},$ we obtain immediately, for any $x\in
\mathcal{S}^{2}$%
\begin{equation*}
\mathbb{E}\left\{ \widetilde{T}_{l}(x)\right\} =\mathbb{E}\left\{ \sqrt{%
(2l+1)}P_{l}(t)\right\} =0,\text{ }\mathbb{E}\left\{ \widetilde{T}%
_{l}(x)\right\} ^{2}=1,
\end{equation*}%
and%
\begin{equation*}
\mathbb{E}\left\{ \widetilde{T}_{l}(x)\right\} ^{3}=E\left\{ \sqrt{(2l+1)}%
P_{l}(t)\right\} ^{3}=O\left(\frac{1}{\sqrt{l}}\right)\text{,}
\end{equation*}
whereas ~\cite{Wig2}
\begin{equation*}
E\left\{ \sqrt{(2l+1)}P_{l}(t)\right\} ^{4}\simeq \log l\text{ ,}
\end{equation*}%
and hence all moments of order $q\geq 4$ diverge.}

\end{remark}

Note that in Lemma \ref{lem:int Pl^k =ck/l^2} we worked relatively hard to
establish the precise asymptotics for the moments of Legendre polynomials. A
much cruder version of the same argument gives a uniform upper bound for
the $5$th moment of the \emph{absolute value} of the Legendre polynomials:

\begin{lemma}
\label{lem:int Pl^5 << 1/l^2} We have the following uniform upper bound for
the $5$th moment of the absolute value of Legendre polynomials
\begin{equation}
\int\limits_{0}^{\pi /2}|P_{l}(\cos {\theta })|^{5}\sin \theta d\theta=O\left(
\frac{1}{l^{2}}\right) ,  \label{eq:int Pl^k << 1/l^2}
\end{equation}%
where the constant involved in the $`O^{\prime }$-notation is universal.
\end{lemma}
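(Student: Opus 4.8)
The plan is to follow the same strategy as the proof of Lemma \ref{lem:int Pl^k =ck/l^2}, but in a much cruder form: since only an upper bound of the correct order is needed and no cancellation has to be exploited, the argument simplifies considerably. Write $L = l + 1/2$. First I would insert Hilb's asymptotics (Lemma \ref{lem:Hilb}) and apply the elementary inequality $(a+b)^5 \le 16(a^5+b^5)$, valid for $a,b\ge 0$, to split the fifth moment into a main Bessel contribution and an error contribution:
\[
\int_0^{\pi/2}|P_l(\cos\theta)|^5\sin\theta\,d\theta \ll \int_0^{\pi/2}\left(\frac{\theta}{\sin\theta}\right)^{5/2}|J_0(L\theta)|^5\sin\theta\,d\theta + \int_0^{\pi/2}|\delta(\theta)|^5\sin\theta\,d\theta.
\]

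For the main term I would use that $\theta/\sin\theta$ is bounded on $[0,\pi/2]$ (between $1$ and $\pi/2$), so that $(\theta/\sin\theta)^{5/2}\sin\theta \ll \theta$; after the change of variables $\psi = L\theta$ this yields
\[
\int_0^{\pi/2}\left(\frac{\theta}{\sin\theta}\right)^{5/2}|J_0(L\theta)|^5\sin\theta\,d\theta \ll \frac{1}{L^2}\int_0^{L\pi/2}|J_0(\psi)|^5\psi\,d\psi.
\]
The remaining integral is $O(1)$ uniformly in $L$: on $[0,1]$ it is bounded, while on $[1,L\pi/2]$ the decay estimate \eqref{eq:J(psi)<<1/sqrt(psi)} gives $|J_0(\psi)|^5\psi \ll \psi^{-3/2}$, which is integrable on $[1,\infty)$. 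Since the power is $5$ rather than $3$, this integral converges \emph{absolutely}, so none of the integration-by-parts/oscillation arguments required for the $j=3$ case of Lemma \ref{lem:int Pl^k =ck/l^2} are needed here. This already delivers the bound $\ll 1/L^2 \ll 1/l^2$ for the main term.

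For the error term I would split the range at $\theta = C/l$ and apply the two-regime bound \eqref{eq:delta bound} on each piece. On $[0,C/l]$ one has $|\delta(\theta)|^5 \ll \theta^{10}$ and $\sin\theta \le \theta$, so the contribution is $\ll \int_0^{C/l}\theta^{11}\,d\theta \ll l^{-12}$; on $[C/l,\pi/2]$ one has $|\delta(\theta)|^5 \ll \theta^{5/2}l^{-15/2}$, giving a contribution $\ll l^{-15/2}\int_0^{\pi/2}\theta^{7/2}\,d\theta \ll l^{-15/2}$. Both are negligible compared with $l^{-2}$, and combining them with the main estimate yields \eqref{eq:int Pl^k << 1/l^2}.

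There is essentially no serious obstacle in this argument; the only points requiring minor care are the uniform boundedness of the prefactor $\theta/\sin\theta$ on $[0,\pi/2]$ (so that it may be absorbed into the implied constant) and verifying that the exponent $5$ makes the Bessel integral absolutely convergent at infinity. This absolute convergence is precisely what allows the crude argument to bypass the delicate conditional-convergence analysis that was unavoidable in the $j=3$ case of Lemma \ref{lem:int Pl^k =ck/l^2}.
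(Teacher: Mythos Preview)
Your proof is correct and follows essentially the same approach as the paper: insert Hilb's asymptotics, use that $\theta/\sin\theta$ is bounded on $[0,\pi/2]$, and exploit the absolute convergence of $\int_0^\infty |J_0(\psi)|^5\psi\,d\psi$ via \eqref{eq:J(psi)<<1/sqrt(psi)}. The only cosmetic difference is that you separate the main and error terms via the convexity inequality $(a+b)^5\le 16(a^5+b^5)$, whereas the paper bounds the cross terms as in the first part of the proof of Lemma~\ref{lem:int Pl^k =ck/l^2}; both routes lead to the same $O(l^{-2})$ bound with no additional ideas required.
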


\begin{proof}
The proof follows along the same lines as the beginning of proof of Lemma %
\ref{lem:int Pl^k =ck/l^2} for $k\geq 5$, except that we use the crude upper
bound \eqref{eq:J(psi)<<1/sqrt(psi)} whenever we reach \eqref{eq:mom k <-> J}%
, and the trivial inequality%
\begin{equation*}
\frac{\theta }{\sin {\theta }}\ll 1,\text{ for }\theta \in \left[ 0,\frac{%
\pi }{2}\right] .
\end{equation*}
\end{proof}


\begin{thebibliography}{99}
\bibitem{adlertaylor} Adler, Robert J.; Taylor, Jonathan E. Random fields
and Geometry. Springer Monographs in Mathematics. Springer, New York, 2007.
xviii+448 pp.

\bibitem{BGS} Blum, G; Gnutzmann, S; Smilansky, U. Nodal Domains Statistics:
A Criterion for Quantum Chaos. Phys. Rev. Lett. 88 (2002), 114101.

\bibitem{Cruz1} Cruz, M.; Cayon, L., Martinez-Gonzalez, E.; Vielva, P.; Jin,
J. (2007) The non-Gaussian Cold Spot in the 3-year WMAP Data, \emph{%
Astrophysical Journal, } 655 , 11-20

\bibitem{Cruz2} Cruz, M.; Cayon, L.; Martinez-Gonzalez, E.; Vielva, P.
(2006) The non-Gaussian Cold Spot in WMAP: Significance, Morphology and
Foreground Contribution, \emph{Monthly Notices of the Royal Astronomical
Society,} 369, 57-67

\bibitem{dodelson} Dodelson, S. Modern Cosmology, Academic Press, 2003.

\bibitem{durrer} Durrer, R. (2008) \emph{The Cosmic Microwave Background},
Cambridge Univgersity Press

\bibitem{hansen} Hansen, F.K.; Cabella, P.; Marinucci, D.; Vittorio, N.
(2004) Asymmetries in the Local Curvature of the WMAP Data, \emph{%
Astrophysical Journal Letters}, p.L67-L70

\bibitem{hikage1} Hikage, C.; Matsubara, T.; Coles, P.; Liguori, M.; Hansen,
F.K.; Matarrese, S. (2008) Limits on Primordial Non-Gaussianity from
Minkowski Functionals of the WMAP Temperature Anisotropies, \emph{Monthly
Notices of the Royal Astronomical Society, }389:1439-1446,2008

\bibitem{hikage2} Hikage, C.; Komatsu, E.; Matsubara, T. (2006) Primordial
Non-Gaussianity and Analytical Formula for Minkowski Functionals of the
Cosmic Microwave Background and Large-scale Structure, \emph{Astrophysical
Journal} 653 (2006) 11-26

\bibitem{MaPeCUP} Marinucci, D.; Peccati, G. (2011) Random Fields on the
Sphere: Representations, Limit Theorems and Cosmological Applications,
London Mathematical Society Lecture Notes, Cambridge University Press

\bibitem{MaPeJMP} Marinucci, D.; Peccati, G. Ergodicity and Gaussianity
for Spherical Random Fields, \emph{Journal of Mathematical Physics,} 51
(2010), 043301, 23 pp.

\bibitem{MaWi} Marinucci, D.; Wigman, I. (2010) On the Excursion Sets of
Spherical Gaussian Eigenfunctions, preprint, arXiv 1009.4367

\bibitem{matsubara} Matsubara, T. (2010) Analytic Minkowski Functionals of
the Cosmic Microwave Background: Second-order Non-Gaussianity with
Bispectrum and Trispectrum, \emph{Physical Review D,} 81:083505

\bibitem{Rice1} Rice, S.O. (1944) Mathematical analysis of random noise,
\emph{Bell Syst. Tech. J.}, 23, 282

\bibitem{Rice2} Rice, S.O. (1945) Mathematical analysis of random noise---%
conclusion, \emph{Bell Syst. Tech. J.}, 24, 46

\bibitem{szego} Szego, G. (1975), Orthogonal Polynomials, Colloquium
Publications of the American Mathematical Society, 4th Edition

\bibitem{VMK} Varshalovich, D.A.; Moskalev, A.N.; Khersonskii, V.K.
(1988). \emph{Quantum Theory of Angular Momentum,\ }World Scientific Press.

\bibitem{VIK} Vilenkin, N.Ja.; Klimyk, A.U. (1991) Representation of Lie
Groups and Special Functions, Kluwer, Dordrecht

\bibitem{Wig2} Wigman, I. Fluctuations of the nodal length of random spherical harmonics, 
\emph{Communications in Mathematical Physics,} 298 (2010), no. 3, 787–831,


\bibitem{Wig3} Wigman, I. Volume fluctuations of the nodal sets of random
Gaussian subordinated spherical harmonics, unpublished.
\end{thebibliography}
\end{document}